\theoremstyle{definition}
\newtheorem{assumption}{Assumption}
\newtheorem{definition}{Definition}
\newtheorem{remark}{Remark}
\theoremstyle{definition}
\theoremstyle{definition}
\newtheorem{problem}{Problem}
\theoremstyle{plain}
\newtheorem{theorem}{Theorem}
\newtheorem{lemma}{Lemma}
\title{
Social Media and Misleading Information in a Democracy: A Mechanism Design Approach
}
\author{Aditya Dave, \textit{Student Member, IEEE}, Ioannis Vasileios Chremos, \textit{Student Member, IEEE}, \\ Andreas A. Malikopoulos, \textit{Senior Member, IEEE}%  <-this % stops a space
\thanks{This research was supported by the Sociotechnical Systems Center (SSC) at the University of Delaware.}%
\thanks{The authors are with the Department of Mechanical Engineering, University of Delaware, Newark, DE, 19716, USA (emails: \texttt{adidave@udel.edu}; \texttt{ichremos@udel.edu}; \texttt{andreas@udel.edu}).}%
}
\begin{document}

\maketitle

\thispagestyle{empty}
\pagestyle{empty}

\begin{abstract}

In this paper, we present a resource allocation mechanism for the problem of incentivizing filtering among a finite number of strategic social media {platforms}. We consider the presence of a strategic government and private knowledge of how misinformation affects the users of the social media platforms. Our proposed mechanism incentivizes social media {platforms} to filter misleading information efficiently, and thus indirectly prevents the spread of fake news. In particular, we design an economically inspired mechanism that strongly implements all generalized Nash equilibria for efficient filtering of misleading information in the induced game. We show that our mechanism is individually rational, budget balanced, while it has at least one equilibrium. Finally, we show that \textcolor{black}{for quasi-concave utilities and constraints, our mechanism admits a generalized Nash equilibrium and implements a Pareto efficient solution.}

%   Ioannis' version of the abstract
%
%In this paper, we present a resource allocation mechanism for the problem of incentivizing filtering among a finite number of strategic social media platforms. \textcolor{black}{We consider the strategic interaction between a government and social media platforms with asymmetric information, focusing on how misinformation affects social media users}. Our proposed mechanism incentivizes social media {platforms} to filter misleading information efficiently, thus indirectly \textcolor{black}{preventing the spread of fake news.} In particular, we design an economically inspired mechanism that strongly implements all generalized Nash equilibria for efficient filtering of misleading information in the induced game. We show that our mechanism is individually rational, budget balanced, while it has at least one equilibrium. Finally, we show that our mechanism \textcolor{black}{admits} a generalized Nash equilibrium and implements a Pareto efficient solution for quasi-concave utilities and constraints.

\end{abstract}

\begin{IEEEkeywords}
Social media, fake news, mechanism design, Nash-implementation
\end{IEEEkeywords}

%\IEEEpeerreviewmaketitle

\section{Introduction} \label{section:intro}

For the last few years, political commentators have been indicating that we live in a \textit{post-truth} era \cite{davies2016age}, wherein the deluge of information available on the internet has made it extremely difficult to identify facts. As a result, individuals have developed a tendency to form their opinions based on the \textit{believability} of presented information rather than its truthfulness \cite{cone2019believability}. \textcolor{black}{This phenomenon is exacerbated by the business practices of social media platforms, which often seek to maximize the \textit{engagement} of their users at all costs. In fact, the algorithms developed by platforms for this purpose often promote conspiracy theories among their users \cite{tufekci2018youtube}.}

The sensitivity of users of social media platforms to conspiratorial ideas makes them an ideal terrain to conduct political misinformation campaigns \cite{kramer2014experimental, weedon2017information}. Such campaigns are especially effective tools to disrupt democratic institutions, because
\textcolor{black}{the functioning of stable democracies relies on \textit{common knowledge} about the political actors and the processes they can use to gain public support \cite{farrell2018common}. The trust held by the citizens of a democracy on common knowledge includes: (i) trust that all political actors act in good faith when contesting for power, (ii) trust that elections lead to a free and fair transfer of power between the political actors, and (iii) trust that democratic institutions ensure that elected officials wield their power in the best interest of the citizens.} In contrast, citizens of democracies often have a \textit{contested knowledge} regarding who should hold power and \textcolor{black}{how they should use it \cite{farrell2018common}.}
The introduction of \textit{alternative facts} can reduce the trust on common knowledge about democracy, \textcolor{black}{especially if they become accepted beliefs among the citizens.} Such disruptions on \textcolor{black}{the trust on common knowledge} can be found in the $2016$ U.S. elections \cite{allcott2017social} and Brexit Campaign in $2016$ \cite{oxford2018russia}, \textcolor{black}{where the spread of misinformation through social media platforms} resulted in a large number of citizens mistrusting the results of voting.

To tackle this growing phenomenon of misinformation, in this paper, we consider a finite group of social media platforms, \textcolor{black}{whose users represent the citizens in a democracy,} and a democratic government. Every post in the platforms is associated with a parameter that captures its informativeness, \textcolor{black}{which can take values between two extremes: (i) completely factual and (ii) complete misinformation.}
In our framework, posts that exhibit misinformation can lead to a decrease in trust on common knowledge among the users \textcolor{black}{\cite{bessi2015, brown2018,tucker2017, sternisko2020dark}.}
In addition, social media platforms are considered to have the technologies to \textit{filter}, or label, posts that intend to sacrifice trust on common knowledge. \textcolor{black}{Thus, the government seeks to incentivize the social media platforms to use these technologies and filter any misinformation included in the posts.}

Motivated by capitalistic values, \textcolor{black}{we induce a \textit{misinformation filtering game} to describe the interactions between the social media platforms and the government. In this game, each platform acts as strategic player seeking to maximize their advertisement revenue from the engagement of their users \cite{allcott2017social, jaakonmaki2017}.
User engagement is a metric that can be used to quantify the interaction of users with a platform, and subsequently, how much time they spend on the platform.
Recent efforts reported in the literature on misinformation in social media platforms have indicated that increasing filtering of misinformation leads to decreasing of user engagement \cite{candogan2020optimal}. There are many possible reasons for this phenomenon. First, filtering reduces the total number of posts propagating across the social network. Second, the users whose opinions are filtered may perceive this action as dictatorial censorship \cite{pew2020}, and as a result, they may chose to express their opinions in other platforms. Finally, misinformation tends to elicit stronger reactions, e.g., surprise, joy, sadness, as compared to factual posts \cite{vosoughi2018spread}, which may increase user engagement.
Thus, each platform is reluctant to filter misinformation.}

\textcolor{black}{In our framework, we consider that the government is also a strategic player, whose utility increases as the trust of the users of social media platforms on common knowledge increases. Consequently, increasing filtering of misinformation by the social media platforms increases the utility of the government. Thus the government is willing to make an investment to incentivize the social media platforms to filter misinformation. In our approach, we use mechanism design to distribute this investment among the platforms optimally, and in return, implement an optimal level of filtering.}

Mechanism design was developed for the implementation of system-wide optimal solutions to problems involving multiple rational players with conflicting interests, each with private information about preferences \cite{mas_colell1995}. Note that this approach is different from traditional approaches to decentralized control with private information \cite{2019Aditya_arXiv, mahajan2012, 17, Malikopoulos2018} because the players are not a part of the same time, but in fact, have private and competitive utilities.
The fact that Mechanism design optimizes the behaviour of competing players has led to broad applications spanning different fields including economics, politics, wireless networks, social networks, internet advertising, spectrum and bandwidth trading, logistics, supply chain, management, grid computing, and resource allocation problems in decentralized systems \cite{sharma2012local,sinha2013,kakhbod2011efficient, jain2010,zhang2019efficient,chremos2020_TCNS, chremos2019social}.

The contribution of this paper is as follows. We present an indirect mechanism to incentivize social media platforms to filter misleading information.
We show that our proposed mechanism is (i) feasible, (ii) budget balanced, (iii) individual rational, and (iv) strongly implementable at the equilibria of the induced game. We prove the existence of at least one generalized Nash equilibrium and show that our mechanism induces a Pareto efficient equilibrium.

The rest of the paper is organized as follows. In Section \ref{section:formulation}, we provide the modeling framework and problem formulation. In Section \ref{section:md_problem}, we present our mechanism, and in Section \ref{section:properties_of_mechanism}, we prove the associated properties of the mechanism. In Section V, we interpret the mechanism and present a descriptive example. Finally, in Section VI we conclude and present some directions for future research.

\section{Problem Formulation} \label{section:formulation}

%\subsection{The System Model}

We consider a democratic society consisting of a finite and nonempty set of social media platforms $\mathcal{I} = \{1, \dots, I\}$, $I \in \mathbb{N},$ and a government. 
\color{black}
We refer to the social media platforms and the government collectively as the \textit{players}, and denote the set of all players by $\mathcal{J} = \mathcal{I} \cup \{0\}$, where the index $0$ corresponds to the government. The players strategically take actions in a \textit{misinformation filtering game} that is described in this section.

\subsection{Misinformation Filtering Game for Platforms}

%complete truth and complete falsehood
\color{black}
Let the informativeness of a post on platform $i\in\mathcal{I}$ be denoted by $x_i \in [0,1]$, where $x_i = 0$ indicates that the post contains complete misinformation and $x_i = 1$ indicates that the post contains completely factual information. Our hypothesis, \textcolor{black}{inspired by \cite{farrell2018common, bessi2015, brown2018,tucker2017, sternisko2020dark},} states that the emergence of posts with many falsehoods and a low informativeness, i.e., $x_i \to 0$, leads to a decrease of trust of the users on common knowledge about democracy. \textcolor{black}{Recall that common knowledge about democracy refers to knowledge of political actors in a democratic society and the process they use to gain public support.}

Each social media platform $i \in \mathcal{I}$ has the technological {means} to detect and filter misinformation. \textcolor{black}{In the misinformation filtering game that we impose in our framework, the action $a_i$ of platform $i$ represents the level of filtering imposed by $i$ and takes values in a feasible set of actions $\mathcal{A} = [0,1]$. Each action $a_i$  minimizes the spread of a post that has informativeness $x_i < a_i$, while posts with informativeness $x_i \geq a_i$ are unaffected.} In practice, filtering of misinformation can be implemented in many ways.
The social media platform can place warnings on each post with $x_i < a_i$ to inform the users of their falsehood, or they can modify their algorithms to limit the propagation of such posts among users. Thus, the action $a_i$ represents the lower threshold on informativeness that is acceptable by platform $i$. \textcolor{black}{To this end, we refer to the action $a_i$ as the filter of platform $i$.}

\textcolor{black}{Each platform $i \in \mathcal{I}$ generates revenue by monetizing the \textit{engagement} of their users through advertisements \cite{jaakonmaki2017}.
By increasing filtering of misinformation there is a decrease in user engagement \cite{candogan2020optimal}.
This may be due to a perception of censorship among users \cite{pew2020}, and as a result, they may chose to express their opinions in other platforms.
Consider, for example, platform $l \in \mathcal{I}$ with a filter $a_l > a_i$. Some of the users of $l$, whose posts have been marked up by the filter, may migrate to platform $i$ which will lead to an increase in the engagement of platform $i$. This phenomenon motivates us to define a set of \textit{competing platforms}.}

%$~$

%\todo[inline]{Need to highlight that these are problems that will occur in the early implementation of filtering. This is the case we are studying and incentivizing. In the long run, many of these issues will probably be resolved.}

\begin{definition}
For each platform $i \in \mathcal{I}$, the set $\mathcal{C}_i \subset \mathcal{I}$, with $i \in\mathcal{C}_i$, is the set of \textit{competing platforms} whose choice of filters has an impact on the engagement of platform $i$.
\end{definition}

To simplify the presentation of our results, we consider that for any two platforms $i,k \in \mathcal{I}$, if $i \in \mathcal{C}_k$, then $k \in \mathcal{C}_i$. However, our mechanism can easily be extended to the case of asymmetric competition among social media. Given the set of competing platforms $\mathcal{C}_i$, we can define a \textit{valuation function} of platform $i$.

\begin{definition}
    The \textit{valuation function} of a social media platform $i \in \mathcal{I}$ is 
    $v_i\big(a_k : k \in \mathcal{C}_i \big) : \mathcal{A}^{|\mathcal{C}_i|} \to \mathbb{R}_{\geq 0}$. \textcolor{black}{It is a decreasing function with respect to $a_i$ and strictly increasing with respect to $a_l$ for all $l \in \mathcal{C}_{-i}$, where $\mathcal{C}_{-i} = \mathcal{C}_{i} \setminus \{i\}$.}
\end{definition}

\textcolor{black}{The valuation function $v_i\big(a_k : k \in \mathcal{C}_i \big)$ corresponds to the revenue generated by platform $i$ given the user engagement after all platforms have implemented their filters. A higher value of $a_i$ will result in decreasing user engagement in platform $i$, and thus their revenue. On the other hand, a higher value of $a_l$ of another competing platform $l \in \mathcal{C}_{-i}$ will result in increasing user engagement, and thus revenue, in platform $i$.}

\textcolor{black}{Next, recall from the discussion in the previous section that filtering of misinformation in a social media platform increases the trust of the users of this platform on common knowledge about democracy. Next, for each platform $i \in \mathcal{I}$, we define the average trust function on common knowledge.}

\begin{definition}
The \textit{average trust function} of the users of platform $i \in \mathcal{I}$ on common knowledge is $h_i(a_i) : \mathcal{A} \to [0, 1]$, \textcolor{black}{and it is a strictly increasing function with respect to $a_i$.}
\end{definition}

\textcolor{black}{The average trust function $h_i(a_i)$ captures the impact of filter $a_i$ on the trust on common knowledge across the users of platform $i$.
A low value of $h_i(a_i)$ implies that $a_i$ leads to low trust on common knowledge for the users of platform $i$, and vice versa.}
In practice, platform $i$ can measure the opinions expressed by their users \cite{ceron2014every} through surveys, and over time, use these measurements to estimate the impact of filter $a_i$ using the average trust function $h_i(a_i)$.
%In this paper, we consider that this estimate of the platform is accurate, and that it represents the actual average trust among the users post filtering.

%\todo[inline]{I need to improve the structure of the preceding paragraph to make a clear point about our modelling assumption of accurate estimation of average trust among the users by any platform.}

%{After implementation of the filters, one could model the evolution of trust on common knowledge among the users of a platform $i$ using opinion dynamics \cite{proskurnikov2017tutorial,proskurnikov2018tutorial}. For the purposes of our work, however, we do not consider a fine grained model for the dynamics.}

%We consider that the explicit form of the average trust function $h_i(\cdot)$ is known only to platform $i$. However, this function cannot be strategically manipulated by platform $i$ because it represents an estimate of the underlying physical process of opinion evolution among their users.

\color{black}

\subsection{Misinformation Filtering Game for the Government}

Recall that, in our framework, the government is considered the strategic player $0 \in \mathcal{J}$. The government's objective is to maximize the trust of the users of all social media platforms on common knowledge.
Therefore, the government selects an action  $a_0\in\mathcal{A} = [0, 1]$ that designates a lower bound which must be satisfied by the aggregate average trust of all social media platforms in $\mathcal{I}$. To this end, we refer to the action $a_0$ as the government's lower bound on trust on common knowledge.

\color{black}

Let $N_i \in \mathbb{N}$ be the total number of users of the social media platform $i \in \mathcal{I}$. Then, the fraction of the number of users of $i$ with respect to the total number of users of all {platforms} is
\begin{equation}\label{eqn:n_i}
    n_i = \frac{N_i}{\sum_{l \in \mathcal{I}} N_l}.
\end{equation}
The fraction $n_i$ represents the contribution of users in platform $i$ on the average trust on common knowledge about democracy. \textcolor{black}{Since $\sum_{i \in \mathcal{I}} n_i = 1$, the aggregate average trust on common knowledge is $\sum_{i \in \mathcal{I}} n_i \cdot h_i(a_i)$.
In our framework, the government's role is to select the lower-bound $a_0$ for the aggregate average trust.}
After the government decides on $a_0$, each {platform} $i \in \mathcal{I}$ who decides to participate in the game must select a filter $a_i$ that satisfies the following constraint:
\begin{equation}\label{eqn:a_0-constraint}
    a_0 - \sum_{i \in \mathcal{I}} n_i \cdot h_i(a_i) \leq 0.
\end{equation}
Next, we define the government's valuation as a function of the lower bound on trust $a_0$.

\color{black}
\begin{definition}
The \textit{valuation function} of the government is $v_0(a_0) : [0, 1] \to \mathbb{R}_{\geq 0}$, and it is an increasing function with respect to the lower bound $a_0$.
\end{definition}
The government's valuation function $v_0(a_0)$ assigns a monetary value on the lower bound $a_0$. Recall that the government seeks to increase the trust on common knowledge among the users of all social media platforms. Thus, the government's valuation increases as the lower bound on aggregate average trust increases.
\color{black}
We also consider that the government might have limited resources available to invest in this problem, i.e., there exists a finite monetary budget $b_0 \in \mathbb{R}_{\geq 0}$ 
\color{black}
representing the maximum possible expenditure of the government for this problem.

\subsection{Information Structure}

\color{black}

\textcolor{black}{In this subsection, we specify the private and public  information structure corresponding to each player in the imposed game.}

%is played in a static environment, and thus, its information structure does not evolve.
%The information structure of the misinformation filtering game can be described as follows:}

\textcolor{black}{\textit{1) Public information:}} The set of competing platforms $\mathcal{C}_i$ and fraction of users $n_i$ of each platform $i \in \mathcal{I}$ are known to all players in set $\mathcal{J}$. \textcolor{black}{Moreover, the set of feasible actions $\mathcal{A}$ is known to all players in the set $\mathcal{J}$.}

\textcolor{black}{\textit{2) Valuation functions:}} The valuation function $v_i(\cdot)$ of each social media platform $i \in \mathcal{I}$ is considered private information, and thus, it is known only to platform $i$. Similarly, the valuation function $v_0(\cdot)$ and the budget $b_0$ of the government are private information of the government. 

\textcolor{black}{\textit{3) Average trust functions:}} The average trust function $h_i(\cdot)$ of social media platform $i \in \mathcal{I}$ is considered private information, and thus, it is known only to platform $i$ (it is not known to the government).

\color{black}
\subsection{Assumptions}

\color{black}

In the modeling framework presented above, we impose the following assumptions:

\begin{assumption} \label{assumption:cardinality}
    For each platform $i \in \mathcal{I}$, \ $|\mathcal{C}_i| \geq 3$.
\end{assumption}

We impose this assumption to simplify the exposition of our mechanism. Assumption \ref{assumption:cardinality} implies that each user subscribes in multiple social media platforms. 
\textcolor{black}{It has been shown in the literature that each user, on an average, subscribes to 8 social media platforms \cite{datareportal2020}.
Nevertheless,} we present an extension of our mechanism for $|\mathcal{C}_{i}| \geq 2$ in Appendix A.

\color{black}
\begin{assumption}\label{assumption:filter_compatibility}
    The valuation function $v_i\big(a_k : k \in \mathcal{C}_i \big) : \mathcal{A}^{|\mathcal{C}_i|} \to \mathbb{R}_{\geq 0}$ of each social media platform $i \in \mathcal{I}$ is a concave and differentiable function with respect to $a_k$.%, for all $k \in \mathcal{C}_i$. %Furthermore, the valuation $v_i(a_k : k \in \mathcal{C}_i)$ is decreasing with respect to the filter $a_i$, while it is strictly increasing with the filter $a_l$ for all $l \in \mathcal{C}_{-i}$, where $\mathcal{C}_{-i} = \mathcal{C}_{i} \setminus \{i\}$.
\end{assumption}

\color{black}

The concavity of $v_i\big(a_k : k \in \mathcal{C}_i \big)$ captures the diminishing marginal change in engagement due to additional filtering. \textcolor{black}{Practically, the higher the value of $a_i$, the more users of platform $i$ will perceive the filter as censorship of their opinions.
Thus, for platform $i$, increasing a low-value filter may lead to a lesser loss in engagement as compared to increasing a filter whose value is already high.}
%Concavity of valuation functions is a standard assumption in the literature on mechanism design \cite{sharma2012local,sinha2013,jain2010,kakhbod2011efficient, zhang2019efficient}.
\textcolor{black}{Nevertheless, to ensure the robustness of our proposed mechanism,} we also present an analysis of our system by relaxing Assumption \ref{assumption:filter_compatibility} in Section IV-A.

\color{black}

\begin{assumption}\label{assumption:average_trust}
    The average trust function $h_i(a_i) : \mathcal{A} \to [0, 1]$ of each social media platform $i \in \mathcal{I}$ is a concave and differentiable function with respect to $a_i$.
\end{assumption}

\color{black}

%This assumption captures our hypothesis that additional filtering of misinformation leads to an increase in average trust. 
\textcolor{black}{The concavity of $h_i(a_i)$ implies that, for large values of $a_i$, a small incremental change of $a_i$ would not have a significant impact on the average trust of users on common knowledge. Practically, this implies  low values of $a_i$ will have a major impact on the average trust function. Nevertheless, to ensure the robustness of our mechanism, we also present an analysis of our system by relaxing Assumption \ref{assumption:average_trust} in Section IV-A.}

\color{black}

\begin{assumption}\label{assumption:government_valuation}
    The valuation function of the government $v_0(a_0) : [0, 1] \to \mathbb{R}_{\geq 0}$ is a concave and differentiable function with respect to the lower-bound $a_0$.
\end{assumption}

\color{black}

%Assumption \ref{assumption:government_valuation} captures the fact that the government seeks to increase the average trust on common knowledge increases.
%The concavity of $v_0(a_0)$ captures the fact that \textcolor{black}{the government receives a greater utility for increasing a very low $a_0$, than for increasing an already high $a_0$. 
\textcolor{black}{Practically, for high values of $a_0$, the government might not be interested in investing additional resources to increase $a_0$ even more, as the impact on improving common knowledge would not be significant. Nevertheless,}
we also present an analysis of our system by relaxing Assumption \ref{assumption:government_valuation} in Section IV-A.

\begin{assumption} \label{assumption:assumed_knowledge_b}
    The output of the function $h_i(a_i)$ can be monitored by any competing platform $l\in\mathcal{C}_{- i}$, and a violation of the condition \eqref{eqn:a_0-constraint} can be detected by the government.
\end{assumption}

\textcolor{black}{Assumption \ref{assumption:assumed_knowledge_b} helps us enforce the mechanism, which we present in Section \ref{section:md_problem}, in a static environment. 
In the mechanism, each platform $i \in \mathcal{I}$ commits to a minimum value of the average trust function among their users which can be achieved by choosing an appropriate value for $a_i$. Consider that a platform $i$ selects a value $a_i$ that fails to satisfy this commitment. Practically, the government can detect a violation of \eqref{eqn:a_0-constraint} by gauging public opinion on the internet and through surveys. However, the government does not know the function $h_i(\cdot)$ of platform $i$, and thus, would penalize each platform in $\mathcal{I}$ equally for the violation of \eqref{eqn:a_0-constraint}. To avoid the penalty for the failure of platform $i$, a competing platform $l \in \mathcal{C}_{-i}$ can report the violation of $i$. %We also observe that the valuation of platform $l$ increases with an increase in filter $a_i$.
Thus, it is reasonable to consider that each platform $i \in \mathcal{I}$ monitors the output $h_l(a_l)$ of each competing platform $l \in \mathcal{C}_{-i}$ to maximize their own utility.}
%In order to enforce the mechanism in a static environment, we consider that the output of the average trust $h_i(a_i)$ can be monitored by any competing platform $l\in\mathcal{C}_{- i}$. In the proposed mechanism in Section \ref{section:md_problem}, each platform $i \in \mathcal{I}$ will commit to a lower bound of average trust that it will achieve through filtering. Then, if the filter $a_i$ selected by platform $i$ leads to an average trust $h_i(a_i)$ less than the commitment, another competing social media platform $l \in \mathcal{C}_{-i}$ can choose to report this discrepancy to the social planner. Platform $l$ has an incentive to report the discrepancy because its valuation $v_l$ increases with the filter $a_i$. Furthermore, the government may impose a large tax on all social media when the constraint \eqref{eqn:constraint_2nd} is violated, unless the culprit is revealed.
We believe that using a dynamic mechanism, we could potentially relax Assumption \ref{assumption:assumed_knowledge_b} \cite{zhang2019efficient}. This would be a potential direction for future research.

\begin{assumption} \label{assumption:excludability}
    The government ensures that any social media platform $i \in \mathcal{I}$ that does not participate in the mechanism receives no benefits from the filters of participating social media.
    %, i.e., their valuation is given by $v_i\big(a_k = 0 : k \in \mathcal{C}_i\big)$.
\end{assumption}

%\textcolor{black}{Assumption \ref{assumption:excludability} considers that the government has the ability to exclude non-participating platforms from receiving benefits from the filters of competing platforms that do participate in the mechanism.}
In static mechanisms, the ability to exclude a player from receiving benefits of some common resource is a necessary condition for voluntary participation of players without any monetary investment \cite{saijo2010fundamental}. This condition is often assumed implicitly in the literature \cite{sharma2012local,sinha2013,jain2010,kakhbod2011efficient}. In our mechanism, the government can make an investment up to the budget $b_0$. \textcolor{black}{Thus, we assume \textit{partial excludability} in Assumption \ref{assumption:excludability}, where a non-participating platform $i$ still receives the maximum valuation for selecting filter $a_i = 0$, but cannot receive benefits from the filters of any participating platforms.
In practice, the government can publicize that platform $i$ has chosen not to contribute in a collective endeavor to filter misinformation.
The resulting loss in credibility among the users of the platforms that participate will minimize their migration to platform $i$. This assumption may be relaxed using a dynamic mechanism, which could be another direction for future research \cite{farhadi2019}.}

\subsection{Problem Statement}

\textcolor{black}{Since there is a conflict of interest between the government and the social media platforms, we consider that the government} hires a social planner to design a mechanism \textcolor{black}{to impose the misinformation filtering game. The mechanism must serve two purposes: (i) incentivize all platforms to voluntarily participate in the game, and (ii) induce a selection of filters that maximizes the \textit{social welfare} of the system. The social welfare of the system is the sum of utilities of all players, formally defined next.
To meet these objectives, the social planner asks each player $i \in \mathcal{J}$ to send a message $m_i$ from a set of feasible messages $\mathcal{M}_i$. Based on the message profile $m = (m_0, m_1, \dots, m_{|\mathcal{I}|})$, the social planner assigns a tax $\textcolor{black}{\tau}_i(m) \in \mathbb{R}$ for each social media platform $i \in \mathcal{I}$, and an investment $\textcolor{black}{\tau}_0(m) \in \mathbb{R}_{\geq 0}$ for the government. The message and tax of each player is formally defined in Section III-B. By convention, a tax $\textcolor{black}{\tau}_i(m) > 0$ is a payment made by player $i \in \mathcal{J}$, and a tax $\textcolor{black}{\tau}_i(m) < 0$ is a subsidy given to player $i$.}
Thus, the taxes of the platforms can be either monetary payments or subsidies, whereas, the government may never collect a monetary subsidy from any platform.
\color{black}
Note that the social planner must not receive any profit, nor incur any losses, for designing and implementing the mechanism, which implies that the mechanism should be budget balanced, i.e., $\sum_{i \in \mathcal{J}} \textcolor{black}{\tau}_i(m) = 0$. 

Next, we define the utilities of the players.

\begin{definition}
The \textit{utility} of each platform $i \in \mathcal{I}$ is given by $u_i\big(m , a_k: k \in \mathcal{C}_i\big) := v_i\big(a_k: k \in \mathcal{C}_i\big) - \textcolor{black}{\tau}_i(m),$
while the utility of the government is given by  $u_0(m, a_0) := v_0(a_0) - \textcolor{black}{\tau}_0(m).$
\end{definition}

The social welfare is $u_0(m,a_0) + \sum_{i \in \mathcal{I}} u_i(m, a_k:k\in \mathcal{C}_i)$. The optimization problem for the social planner is to maximize the social welfare, and it is formulated as follows.

\begin{problem}\label{problem1}
        \begin{align}
            \max_{a, \textcolor{black}{\tau}(m)} \bigg(v_0(a_0) &- \textcolor{black}{\tau}_0(m) + \sum_{i \in \mathcal{I}} \Big(v_i\big(a_k:k \in \mathcal{C}_i\big) - \textcolor{black}{\tau}_i(m)\Big)\bigg), \label{eqn:objective_1} \\
            \text{subject to:} \; & 0 \leq a_i \leq 1, \quad \forall i \in \mathcal{J} \label{eqn:constraint_1st}, \\
            & a_0 - \sum_{i \in \mathcal{I}} n_i \cdot h_i(a_i) \leq 0, \label{eqn:constraint_2nd} \\
            & 0 \leq \textcolor{black}{\tau}_0(m) \leq b_0, \label{eqn:constraint_tax_1st} \\
            & \sum_{i \in \mathcal{J}} \textcolor{black}{\tau}_i(m) = 0, \label{eqn:constraint_tax_2nd}
        \end{align}
    where $a = \big(a_0, a_1 \dots, a_I\big)$ and $\textcolor{black}{\tau}(m) = \big(\textcolor{black}{\tau}_0(m), \textcolor{black}{\tau}_1(m),$ $\dots, \textcolor{black}{\tau}_{|\mathcal{I}|}(m)\big)$ denote the action and tax profiles of all players, respectively.
\end{problem}

In Problem \ref{problem1}, \eqref{eqn:constraint_2nd} ensures that the aggregate average trust of all users satisfies the government's lower bound $a_0$,  \eqref{eqn:constraint_tax_1st} 
restricts the government's investment $\textcolor{black}{\tau}_0(m)$ to be within the available budget, and \eqref{eqn:constraint_tax_2nd} ensures that the mechanism is budget balanced.

Note that, in Problem \ref{problem1}, the social planner does not have knowledge about the functional form of either the valuation function $v_i(\cdot)$ of any player $i \in \mathcal{J}$, or the average trust function $h_i(\cdot)$ of any platform $i \in \mathcal{I}$.
If the social planner knew these functions, then she could solve Problem \ref{problem1} using standard optimization methods to allocate the optimal filter $a_i$ and tax $\textcolor{black}{\tau}_i(m)$ to each platform $i \in \mathcal{I}$, and the optimal lower bound $a_0$ and investment $\textcolor{black}{\tau}_0(m)$ to the government. The objective function of Problem \ref{problem1} is differentiable and concave, and the set of feasible solutions is non-empty, convex, and compact. Thus, Problem \ref{problem1} is a convex optimization problem with a unique optimal solution \cite{boyd}. 
%Even when we consider the private information of various players described in the information structure, Problem \ref{problem1} remains a convex optimization problem with a unique solution, denoted by $a ^ {*o} = (a_0 ^ {*o}, a_1 ^ {*o}, \dots, a_{|\mathcal{I}|} ^ {*o})$. 
However, this solution cannot be computed directly by the social planner because of the private information of the players. Note that if the social planner simply asks the players to report their private information, then the players may not be truthful. Thus, the social planner seeks to design the taxes $\textcolor{black}{\tau}_i(m)$ for each player $i \in \mathcal{J}$ to incentivize the players to be truthful while, at the same time, maximizing the social welfare.

%The social planner cannot simply ask the platforms for their private information as they may misreport it to improve their utility. %In the next section, we design a mechanism that can derive this unique optimal solution, denoted by $a ^ * = (a_0 ^ {*o}, a_1 ^ {*o}, \dots, a_{|\mathcal{I}|} ^ {*o})$. %and referred to as the \textit{efficient outcome} of the mechanism.

\begin{remark}
    The government has a no compelling reason to misreport to the social planner their budget $b_0$. Thus, we consider that the social planner has knowledge of $b_0$.
\end{remark}

\begin{remark}
    By maximizing the social welfare $u_0(m,a_0) + \sum_{i \in \mathcal{I}}u_i(m,a_k:k\in \mathcal{C}_i)$ in Problem \ref{problem1}, the utility of each player is maximized. Hence, participation of the players in the mechanism is incentivized. Note that the government is not in the position to design the mechanism because they would seek to optimize only their own utility $u_0(m,a_0)$. Thus, the government hires the social planner to design and implement the mechanism described next.
\end{remark}
\color{black}

\section{Mechanism Design Approach} \label{section:md_problem}

\color{black}

In this section, we present a two-step mechanism to incentivize filtering of misinformation among social media platforms. The objective of the first step is to ensure that the social media platforms voluntarily agree to participate in the mechanism. The objectives of the second step are to: (i) extract truthful information from the participating platforms, (ii) derive the optimal level of investment for the government, and (iii) design appropriate taxes for the platforms to maximize the social welfare of the system.

\subsection{Step One - The Participation Step}

\color{black}

In step one of the mechanism, each social media platform $i \in \mathcal{I}$ must decide whether to participate in the mechanism, with complete knowledge of the rules of the second step of the mechanism described in the next subsection. Consider a platform $i \in \mathcal{I}$ that chooses not to participate in the mechanism. Thus, this platform neither pays taxes nor receives any subsidies from the government, i.e., $\textcolor{black}{\textcolor{black}{\tau}_i(m)} = 0$. Furthermore, platform $i$ is free to select the lowest value of $a_i = 0$ that maximizes the valuation $v_i\big(a_k:k\in\mathcal{C}_i\big)$. Meanwhile, another competing platform $l \in \mathcal{C}_{- i}$ may decide to participate in the mechanism and subsequently implement a non-zero filter $a_l$. \textcolor{black}{From Assumption \ref{assumption:excludability}, the government ensures that platform $l$ receives no utility as a result of filter $a_l$. Thus, the utility of the non-participating platform $i$ is given by $v_i\big(a_k = 0: k \in \mathcal{C}_i\big)$. We will use this utility for a non-participating platform in Theorem \ref{thm:ir} of Section IV to establish that all platforms decide to voluntarily participate in step one of the mechanism.}

%In the future, we would like to formally expand upon how the government may achieve this property by considering a more detailed model of user behavior.

\color{black}
\subsection{Step Two - The Bargaining Step}

\color{black}
\textcolor{black}{In step two, the social planner asks each player $i \in \mathcal{J}$ to broadcast a message $m_i$ from a set of feasible messages $\mathcal{M}_i$.} For each platform $i \in \mathcal{I}$, let  $\mathcal{D}_{i} = \mathcal{C}_i \cup \{0\}$, and $\mathcal{D}_{-i} = \mathcal{D}_i \setminus \{i\}$.
The message of platform $i$ is defined as
\begin{equation}\label{eqn:defn_message}
    m_i := (\textcolor{black}{\Tilde{h}_i}, \textcolor{black}{\Tilde{p}_i}, \Tilde{a}_i),
\end{equation}
where
$\textcolor{black}{\Tilde{h}_i} \in \mathbb{R}_{\geq 0}$ is the minimum average trust that platform $i$ proposes to achieve through filtering; $\textcolor{black}{\Tilde{p}_i} \in \mathbb{R}_{\geq 0} ^ {|\mathcal{D}_{-i}|}$ is the collection of prices that platform $i$ is willing to pay or receive per unit changes in the filters of other competing platforms \textcolor{black}{(except $i$)} and \textcolor{black}{the government's lower bound}, given by
\begin{equation}\label{eqn:prices}
    \textcolor{black}{\Tilde{p}_i} := (\textcolor{black}{\Tilde{p}_l} ^ i : l \in \mathcal{D}_{-i});
\end{equation}
and $\Tilde{a}_i = (\Tilde{a}_k ^ i: k \in \mathcal{D}_i),$ $\Tilde{a}_i \in \mathbb{R} ^ {|\mathcal{D}_{i}|},$ is the profile of filters for all competing platforms \textcolor{black}{(including $i$)} and \textcolor{black}{government's lower bound proposed by platform $i$.}

\color{black}

\begin{remark}
    Note that each platform proposes a filter for themselves, denoted by $\Tilde{a}_i^i$, in their message $m_i$. However, as it can be seen in \eqref{eqn:prices}, platform $i$ does not propose a price corresponding to $\Tilde{a}_i^i$. This is because we want to give every platform the ability to influence their filter, but not the ability to influence the price associated with their own filter. %Such a platform is referred to as a \textit{price taker}.
\end{remark}

\color{black}

The message of the government is  $m_0 := (\textcolor{black}{\Tilde{p}_0}, \Tilde{a}_0 ^ 0)$, where $\textcolor{black}{\Tilde{p}_0} \in \mathbb{R}_{\geq 0}$ is the price that the government is willing to pay or receive per unit change of the average trust, and $\Tilde{a}_0^0 \in \mathbb{R}$ is the \textcolor{black}{lower bound} proposed by the government. Note that our mechanism respects the privacy of each platform $i \in \mathcal{I}$ since she does not request either their valuation function $v_i\big(a_k : k \in \mathcal{C}_i \big)$ or their average trust function $h_i(a_i)$. Similarly, the government is not forced to publicly reveal the functional form of their valuation function $v_0(a_0)$. \textcolor{black}{Also each platform $i$ is free to select any feasible values for the components of the message $m_i$.}
%, and there is no requirement for these components to be related to any underlying variables. 

\textcolor{black}{Based on the message profile $m := (m_0, m_1,$ $\dots, m_{|\mathcal{I}|})$ that the social planner receives, she allocates the following parameters to the players:}

\textcolor{black}{\textit{1)} The social planner allocates a filter to each platform $i \in \mathcal{I}$ and a lower bound to the government such that the constraints of Problem $1$ are satisfied.}
The filter allocated by the social planner to platform $i$ is $\textcolor{black}{\alpha}_i(m) := \sum_{k \in \mathcal{C}_i} \frac{\Tilde{a}_i ^ k}{|\mathcal{C}_i|}$, i.e., the average of the filters proposed by all competing platforms including $i$. The lower bound allocated by the social planner to the government is $\textcolor{black}{\alpha}_0(m) = \sum_{k \in \mathcal{J}} \frac{\Tilde{a}_0 ^ k}{|\mathcal{J}|}$, i.e., the average of the lower bounds proposed by all platforms and the government. 

\textcolor{black}{\textit{2)}} The social planner allocates a minimum average trust {$\textcolor{black}{\eta_i}(m) \in [0,1]$} to each platform $i \in \mathcal{I}$, given by
\begin{equation}\label{eqn:defn_of_eta}
    \textcolor{black}{\eta_i}(m) = \min\left\{ \frac{n_i \cdot \textcolor{black}{\Tilde{h}_i}}{\sum_{k \in \mathcal{I}} n_k \cdot \textcolor{black}{\Tilde{h}_k}} \cdot \textcolor{black}{\alpha}_0(m), \; 1\right\},
\end{equation}
where the social planner will not accept a message $m_i$ from a platform $i$ that might lead to a situation where $\sum_{k \in \mathcal{I}} n_k \cdot \textcolor{black}{\Tilde{h}_k} = 0$.
\textcolor{black}{The allocated minimum average trust, $\eta_i(m)$, is a lower bound on average trust that must be achieved by platform $i$.}
Let the filter implemented by platform $i$ be $a_i$. Then, platform $i$ must ensure that $n_i \cdot h_i(a_i)\geq \textcolor{black}{\eta_i}(m)$.
%to ensure that constraint \eqref{eqn:constraint_2nd} of Problem \ref{problem1} is not violated.}
Recall \textcolor{black}{from the information structure} that a potential violation of this condition cannot be detected by the social planner since she does not have explicit knowledge of the function $h_i(\cdot)$. However, by Assumption \ref{assumption:assumed_knowledge_b}, the output of $h_i(a_i)$ can be monitored by any other competing platform $l \in \mathcal{C}_{- i}$. %The utility $u_l$ of competing platform $l$ is a function of the filter $a_i$ implemented by platform $i$.
Any violation of $n_i \cdot h_i(a_i)\geq \textcolor{black}{\eta_i}(m)$ will be reported by platform $l$ to the social planner, in order to ensure that platform $i$ implements the largest filter $a_i,$ and maximizes the utility $u_l(m, a_k:k\in \mathcal{C}_l)$. \textcolor{black}{This prevents platforms from violating the constraint imposed by the allocated minimum average trust $\eta_i(m)$.}
%Once aware of a violation, the social planner can punish any violation of the \textcolor{black}{allocated minimum average trust to disincentivize any attempted from platform $i$ to violate it.}

\textcolor{black}{\textit{3)}} The social planner allocates a price $\textcolor{black}{\pi_l}  ^ i := \sum_{k \in \mathcal{C}_{- l} : k \neq i} \frac{\textcolor{black}{\Tilde{p}_l} ^ k}{|\mathcal{C}_{l}| - 2}$, $\textcolor{black}{\pi^i_l} \in \mathbb{R}_{\geq 0},$ to each platform $i \in \mathcal{I}$, corresponding to the allocated filter $\textcolor{black}{\alpha}_l(m)$ of every other competing platform $l \in \mathcal{C}_{- i}$. This price is derived as the average of prices proposed for the allocated filter $\textcolor{black}{\alpha}_l(m)$ by all competing platforms in $\mathcal{C}_{-l}$ except $i$. Thus, the allocated price $\textcolor{black}{\pi_l}^i$ is independent of the prices proposed by both platforms $i$ and $l$. \textcolor{black}{Similarly, the social planner allocates the price $\pi_0 = \sum_{i \in \mathcal{I}} \frac{\textcolor{black}{\Tilde{p}^i_0}}{|\mathcal{I}|}$ to the government.
%for the allocated lower bound $\alpha_0(m)$.
Note that even though the prices allocated to each player
%$\pi_l^i$ of any platform $i \in \mathcal{I}$ corresponding to filter $\alpha_l(m)$ for $l \in \mathcal{C}_{-i}$ 
depend on the message profile $m$, we do not present them with the argument of $m$ to simplify our notation and improve the readability of the subsequent equations.}

\textcolor{black}{\textit{4)}} The social planner \textcolor{black}{allocates} the following tax to each social media platform $i \in \mathcal{I}$,
\begin{gather}
    \textcolor{black}{\tau}_i(m) := - \textcolor{black}{\Tilde{p}_0} \cdot \textcolor{black}{\eta_i}(m) - \sum_{l \in \mathcal{C}_{- i}} \textcolor{black}{\pi_i} ^ l \cdot \textcolor{black}{\alpha}_{i}(m) 
    + \sum_{l \in \mathcal{C}_{- i}} \textcolor{black}{\pi_l} ^ i \cdot \textcolor{black}{\alpha}_{l}(m) \nonumber \\
    + \sum_{l \in \mathcal{C}_{- i} \cup \{0\}} \textcolor{black}{\Tilde{p}_l} ^ i \cdot (\Tilde{a}_l ^ i - \Tilde{a}_l ^ {- i}) ^ 2, \label{eqn:payment_function}
\end{gather}
where $\Tilde{a}_l ^ {- i} = \sum_{k \in \mathcal{C}_{l}:k \neq i} \frac{\Tilde{a}_l ^ k}{|\mathcal{C}_{l}| - 1}$, for each $l \in \mathcal{C}_{- i}$,
is the average of the proposed filters for $l$ by all competing platforms except $i\in\mathcal{I}$, \textcolor{black}{and $\Tilde{a}_0 ^ {- i} = \sum_{k \in \mathcal{J}_{-i}} \frac{\Tilde{a}_0 ^ k}{|\mathcal{J}| - 1}$ is the average of lower bounds proposed by all players except $i$.} The tax $\textcolor{black}{\tau}_i(m)$ of platform $i$ in \eqref{eqn:payment_function} can be \textcolor{black}{interpreted as follows: 
(i) the first term in \eqref{eqn:payment_function} represents a subsidy given by the government to platform $i$ for the increase in average trust among the users of platform $i$; 
(ii) the second term in \eqref{eqn:payment_function} is a collection of subsidies given by each competing platform $l \in \mathcal{C}_{- i}$ to platform $i$ for the increase in valuation $v_l\big(a_k: k \in \mathcal{C}_l\big)$ due to the allocated filter $\textcolor{black}{\alpha}_i$; (iii) the third term in \eqref{eqn:payment_function} is a payment by platform $i$ for the increase in valuation $v_i\big(a_k:k \in \mathcal{C}_i\big)$ due to the allocated filter $\alpha_l$ of each competing platform $l \in \mathcal{C}_{- i}$; and 
(iv) the fourth term in \eqref{eqn:payment_function} is a collection of penalties to platform $i$ \textcolor{black}{if either the filter proposed in message $m_i$ for any competing platform $l \in \mathcal{C}_{- i}$ is inconsistent} with the filters proposed by other platforms, \textcolor{black}{or if the lower bound proposed in $m_i$ is inconsistent with the lower bound proposed by other players.}} Note that the fourth term also penalizes platform $i$ for higher values of proposed prices $\textcolor{black}{\Tilde{p}_l} ^ i$ 
%corresponding to the actions of other players $l \in \mathcal{C}_{- i} \cup \{0\}$, 
and thus, ensures that the platform $i$ proposes lower prices for the actions of other players. 

Finally, the social planner \textcolor{black}{allocates the following investment to the government:}
\begin{equation}\label{eqn:payment_function_gov}
    \textcolor{black}{\tau}_0(m) = \pi_0 \cdot \textcolor{black}{\alpha}_0(m) + (\textcolor{black}{\Tilde{p}_0} - \pi_0) ^ 2,
\end{equation}
\textcolor{black}{where} the first term is the total investment made by the government for the allocated low bound $\textcolor{black}{\alpha}_0(m)$, and the second term is a penalty when the price proposed by the government deviates from the price allocated to the government.

\color{black}
\begin{remark}
Note that in \eqref{eqn:payment_function}, for some filter $a_i>0$ of platform $i$, the social planner takes a payment from each competing platform $l \in \mathcal{C}_{-i}$ and allocates an equal subsidy to platform $i$. This subsidy serves a dual purpose: (i) it incentivizes platform $i$ to implement the filter $a_i$, and (ii) it eventually leads to a fair distribution of the government's investment among all platforms.
\end{remark}

\begin{remark}
    We presented the step two of the mechanism under the implicit assumption that all social media platforms participate in the mechanism. This does not cause any implications, however, since, as we prove in Theorem \ref{thm:ir} next, all platforms eventually, indeed, participate in the mechanism in step one.
\end{remark}

\color{black}

\textcolor{black}{The step two of the mechanism is characterized by the tuple} $\langle \mathcal{M}, g(\cdot) \rangle$, where $\mathcal{M} = \mathcal{M}_0 \times \mathcal{M}_1 \times \dots \times \mathcal{M}_{|\mathcal{I}|}$ is the complete message space of all \textcolor{black}{players}, and $g(\cdot) : \mathcal{M} \to \mathcal{O}$ is the outcome function that maps each message profile to a set of outcomes $\mathcal{O}$. The set of outcomes is in the form
\begin{multline}
    \mathcal{O} := \Big\{\big(\textcolor{black}{\alpha}_0(m), \textcolor{black}{\alpha}_1(m), \dots, \textcolor{black}{\alpha}_{|\mathcal{I}|}(m)\big), \big(\textcolor{black}{\tau}_0(m), \textcolor{black}{\tau}_1(m), \\
    \dots, \textcolor{black}{\tau}_{|\mathcal{I}|}(m)\big) : \textcolor{black}{\alpha}_i(m) \in \mathcal{A}, \; \textcolor{black}{\tau}_i(m) \in \mathbb{R}, \; i\in\mathcal{J}\Big\},
\end{multline}
and the outcome function $g(m)$ determines the outcome of any given message profile $m = (m_0, m_1, \dots, m_I) \in \mathcal{M}$.
%The payment function for the social media platform $i$ is defined as $\textcolor{black}{\tau}_i : \mathcal{M} \to \mathbb{R}$ and determines the monetary payment made or received by the social media platform $i$.

\subsection{Generalized Nash Equilibrium and the Induced Game}

Formally, a mechanism $\langle \mathcal{M}, g(\cdot) \rangle$ together with the utility functions $(u_i)_{i \in \mathcal{I}}$ induces a game in which \textcolor{black}{the social planner allocates the filters $(\textcolor{black}{\alpha}_1(m), \dots, \textcolor{black}{\alpha}_i(m))$ to the platforms and the lower bound $\textcolor{black}{\alpha}_0(m)$ to the government.} Each platform $i \in \mathcal{I}$ that participates in the mechanism must implement the filter $a_i = \textcolor{black}{\alpha}_i(m)$, and the government must select the lower bound $a_0 = \textcolor{black}{\alpha}_0(m)$.
\textcolor{black}{Note that platform $i$ can influence their allocated filter $\alpha_i(m)$ with their message $m_i$.} Thus, the strategy of platform $i$ in the induced game is given by the message $m_i \in \mathcal{M}_i$ \cite{mas_colell1995}, with a constraint that $\textcolor{black}{\alpha}_i(m) \in \mathcal{S}_i(m)$, where
\begin{equation}
    \mathcal{S}_i(m) = \{a_i \in \mathcal{A} : n_i \cdot h_i(a_i) \geq \textcolor{black}{\eta_i}(m)\}.
\end{equation}
Thus, the set of feasible allocations $\mathcal{S}_i(m)$ for $i\in \mathcal{I}$ is a function of the messages of all social media in $\mathcal{I}$ and the government. The strategy of the government is denoted by the message $m_0$ and the set of feasible strategies is given by $\mathcal{M}_0$. For such a game, we select the solution concept of the generalized Nash equilibrium (GNE) \cite{facchinei2010generalized}. Let $m_{- i} = (m_0, \dots, m_{i - 1}, m_{i + 1}, \dots, m_{I})$.
A message profile $m ^ * = (m_i ^ * : i \in \mathcal{J})$ is the GNE of the induced game, if (i) for each $i \in \mathcal{I}$, 
\color{black}
\begin{multline} \label{eqn:GNE_defn}
    u_i\big((m_i ^ *, m_{- i} ^ *), \textcolor{black}{\alpha}_k(m_i ^ *, m_{- i} ^ *) : k \in \mathcal{C}_i\big) \\
    \geq u_i\big((m_i, m_{- i} ^ *), \textcolor{black}{\alpha}_k(m_i, m_{- i} ^ *) : k \in \mathcal{C}_i\big),
\end{multline}
\color{black}
for all $m_i \in \mathcal{M}_i$ and $\textcolor{black}{\alpha}_i \in \mathcal{S}_i(m)$; and (ii) the message $m_0 ^ *$ of the government is such that $u_0\big((m_0 ^ *, m_{- 0} ^ *), \textcolor{black}{\alpha}_0(m_0 ^ *, m_{- 0} ^ *)\big) \geq u_0\big((m_0, m_{- 0} ^ *), \textcolor{black}{\alpha}_0(m_0, m_{- 0} ^ *)\big),$ for all $m_0 \in \mathcal{M}_0$. 
\color{black}
To simplify the notation, in the remaining of the paper, we denote the utility of platform $i \in \mathcal{I}$ by $u_i(m_i,m_{-i})$ and the utility of the government by $u_0(m_0, m_{-0})$.

\begin{remark}
    In general, the GNE solution concept is defined for a game with complete information. However, we adopt this solution in our induced game despite the fact that the valuation function $v_i\big(a_k : k \in \mathcal{C}_i\big)$ and the average trust function $h_i(a_i)$ are the private information of platform $i$. We resolve this discrepancy by considering that the induced game is played repeatedly over multiple iterations, and thus, the social media platforms can utilize an iterative learning process to find a GNE. This interpretation of a GNE is consistent with the theory of mechanism design \cite{groves_b}.
\end{remark}

%\textcolor{black}{An alternative solution concept in our case would be the Bayesian Nash equilibria (BNE), which assumes that all players have private information and beliefs over each others' private information. However, the BNE requires the so-called ``common-prior" assumption, i.e., the beliefs are taking over a known to all probability distribution. By adopting a GNE, we can circumvent this very strong assumption.}

\subsection{Summary of the Notation}

We summarize the variables introduced in Sections II and III in Table \ref{summary_table}. As a general guideline, we use lowercase letters of the English alphabet to denote variables and functions, lowercase letters with tilde to denote variables in a message, and lowercase letters of the Greek alphabet to indicate variables allocated to the players by the social planner. We use scripted letters to denote sets.
Furthermore, we use $\mathbb{R}$ to denote the set of real numbers, $\mathbb{R}_{\geq0}$ to denote the set of non-negative real numbers, and $\mathbb{N}$ to denote the set of natural numbers.

\begin{table}[t]
\color{black}
\caption{A summary of the key variables}
\label{summary_table}
\begin{tabular}{ p{30pt} p{195pt} }
 \hline
 \textbf{Symbol} &  \multicolumn{1}{c}{\textbf{Explanation}}\\[3pt]
 \hline
 %$\mathcal{I}$ &The set of all social media platforms. \\[3pt]
 %$\mathcal{J}$ &The set of players, including the government. \\[3pt]
 $m_i$ &The message broadcast by player $i \in \mathcal{J}$ \\ [3pt]
 $a_i$ & The filter of platform $i \in \mathcal{I}$\\[3pt]
 %\hline
 $\Tilde{a}^i_k$ & The filter proposed by platform $i \in \mathcal{I}$ for platform $k \in \mathcal{C}_i$ \\[3pt]
 $\alpha_i(m)$ & The filter allocated to platform $i \in \mathcal{I}$\\[3pt]
 %\hline
 $a_0$ & The government's lower bound on trust \\[3pt]
 $\Tilde{a}_0$ & The lower bound proposed by the government \\[3pt]
 %\hline
 \multirow{2}{*}{$\Tilde{a}^i_0$} & The lower bound proposed by platform $i \in \mathcal{I}$ for the government \\[3pt]
 $\alpha_0(m)$ & The lower bound allocated to the government \\ [3pt]
  %
 %\hline
 $v_i(\cdot)$ & The valuation function of player $i \in \mathcal{J}$ \\ [3pt]
 %
 %\hline
 $h_i(\cdot)$ & The average trust function of platform $i \in \mathcal{I}$ \\ [3pt]
  %
 %\hline
 $\Tilde{h}_i$ & The proposed minimum average trust of platform $i \in \mathcal{I}$ \\ [3pt]
  %
 %\hline
 $\eta_i(m)$ & The allocated minimum average trust for platform $i \in \mathcal{I}$ \\ [3pt]
  %
 %\hline
 \multirow{2}{*}{$\Tilde{p}^i_l$} & The price proposed by platform $i \in \mathcal{I}$ corresponding to  player $l \in \mathcal{D}_{-i}$ \\ [3pt]
 \multirow{2}{*}{$\pi^i_l$} & The price allocated to platform $i \in \mathcal{I}$ corresponding to player $l \in \mathcal{D}_{-i}$ \\ [3pt]
 %\hline
 $\Tilde{p}_0$ & The price proposed by the government \\ [3pt]
  %
 %\hline
 $\pi_0$ & The price allocated to the government \\ [3pt]
  %
 %\hline
 %
 $\textcolor{black}{\tau}_i(m)$ &The tax allocated to player $i \in \mathcal{J}$\\
 \hline
\end{tabular}
\color{black}
\end{table}

\color{black}

\section{Properties of the Mechanism} \label{section:properties_of_mechanism}

In this section, we show that our proposed mechanism has the following desirable properties: (i) budget balance at GNE, (ii) feasibility at GNE, (iii) strong implementation, (iv) existence of at least one GNE, and (v) individual rationality. 

\textcolor{black}{Recall that each social media platform $i \in \mathcal{I}$ is a strategic player who seeks to maximize their utility $u_i(m_i, m_{-i})$ through the choice of message $m_i \in \mathcal{M}_i$. Thus, we can define the following optimization problem from the perspective of platform $i \in \mathcal{I}$ in the induced game.}

\begin{problem}\label{problem2}
    The optimization problem for social media platform $i \in \mathcal{I}$ in the induced game is
        \begin{align}
            \max_{m_i \in \mathcal{M}_i} \; & v_i\big(\textcolor{black}{\alpha}_k(m) : k \in \mathcal{C}_{-i}\big) - \textcolor{black}{\tau}_i(m), \label{eqn:objective_nash} \\
            \text{subject to: } & 0 \leq \textcolor{black}{\alpha}_i(m) \leq 1, \label{eqn:constraint_nash_1st} \\
            & \textcolor{black}{\eta_i}(m) - n_i \cdot h_i\big(\textcolor{black}{\alpha}_i(m)\big) \leq 0, \label{eqn:constraint_nash_2nd}
        \end{align}
    \textcolor{black}{where the objective function in \eqref{eqn:objective_nash} is the utility $u_i(m_i, m_{-i})$ of platform $i$,} 
    \eqref{eqn:constraint_nash_1st} ensures that the allocated filter of platform $i$ is feasible, 
    and \eqref{eqn:constraint_nash_2nd} ensures that the fraction of average trust among users of platform $i$ is greater than the minimum average trust allocated by the social planner.
\end{problem}
Note that the social planner can ensure that \eqref{eqn:constraint_nash_1st} and \eqref{eqn:constraint_nash_2nd} are hard constraints by imposing a tax $\textcolor{black}{\tau}_i(m) \rightarrow \infty$ when they are violated. \textcolor{black}{Next, recall that the government is also a strategic player in the induced game who seeks to maximize their utility $u_0(m_0,m_{-0})$ through the choice of message $m_0 \in \mathcal{M}_0$.}

\begin{problem}\label{problem3}
    The optimization problem for the government is
        \begin{align}
            \max_{m_0 \in \mathcal{M}_0} \; & v_0\big(\textcolor{black}{\alpha}_0(m)\big) - \textcolor{black}{\tau}_0(m), \label{eqn:objective_nash_gov} \\
            \text{subject to: } & 0 \leq \textcolor{black}{\alpha}_0(m) \leq 1, \label{eqn:constraint_nash_1st_gov} \\
            & \pi_0 \cdot \textcolor{black}{\alpha}_0(m) - b_0 \leq 0, \label{eqn:constraint_nash_2nd_gov}
        \end{align}
        where \textcolor{black}{the objective in \eqref{eqn:objective_nash_gov} is the utility $u_0(m_0, m_{-0})$ of the government,} \eqref{eqn:constraint_nash_1st_gov} ensures that the government's lower bound $a_0$ is feasible, 
        and \eqref{eqn:constraint_nash_2nd_gov} ensures that the total government's investment is less than their budget $b_0$.
\end{problem}

\color{black}
\begin{remark}
    Consider an optimal solution $m_i^* \in \mathcal{M}_i$ of Problem \ref{problem2} for each platform $i \in \mathcal{I}$, and an optimal solution $m_0^* \in \mathcal{M}_0$ of Problem \ref{problem3} for the government. The message profile $m^* = \big(m_0^*, m_1^*, \dots, m_{|\mathcal{I}|}^*\big) \in \mathcal{M}$ satisfies \eqref{eqn:GNE_defn}, and thus, forms a GNE of the induced game. %We use this property of Problems \ref{problem2} and \ref{problem3} in the proofs for most properties of the mechanism.
\end{remark}

\color{black}

\textcolor{black}{Next, we establish some basic properties of the mechanism in Lemmas \ref{lemma:truthful_prices} and \ref{lemma:truthful_action_proposals} at any GNE, if one exists. In Lemma \ref{lemma:truthful_prices}, we refer to Problem \ref{problem3} to show} that the government's proposed price at any GNE of the induced game is equal to the average price proposed by all social media.

\begin{lemma}\label{lemma:truthful_prices}
    Let the message profile $m ^ * \in \mathcal{M}$ be a GNE of the induced game. Then, $\textcolor{black}{\Tilde{p}_0} ^ * = \pi_0 ^ *$ for the government.
\end{lemma}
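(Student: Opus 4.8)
The plan is to analyze the government's optimization problem (Problem~\ref{problem3}) and show that the penalty term in the government's investment forces $\tilde{p}_0^* = \pi_0^*$ at any GNE. First I would isolate the dependence of the government's utility $u_0(m_0, m_{-0}) = v_0(\alpha_0(m)) - \tau_0(m)$ on the price component $\tilde{p}_0$ of the government's own message $m_0 = (\tilde{p}_0, \tilde{a}_0^0)$. The key structural observation is that the allocated price $\pi_0 = \sum_{i \in \mathcal{I}} \frac{\tilde{p}_0^i}{|\mathcal{I}|}$ is computed purely from the prices $\tilde{p}_0^i$ proposed by the \emph{platforms} for the government's lower bound, and therefore $\pi_0$ does \emph{not} depend on $\tilde{p}_0$. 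Likewise, by the definition $\alpha_0(m) = \sum_{k \in \mathcal{J}} \frac{\tilde{a}_0^k}{|\mathcal{J}|}$, the allocated lower bound depends on the government's proposed lower bound $\tilde{a}_0^0$ but not on $\tilde{p}_0$.

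Given this separation, I would examine the investment $\tau_0(m) = \pi_0 \cdot \alpha_0(m) + (\tilde{p}_0 - \pi_0)^2$ from \eqref{eqn:payment_function_gov} and note that the only term involving $\tilde{p}_0$ is the quadratic penalty $(\tilde{p}_0 - \pi_0)^2$. Consequently, when the government maximizes its utility over $m_0$, the dependence on $\tilde{p}_0$ enters solely through the subtraction of this penalty, i.e. the $\tilde{p}_0$-dependent part of $u_0$ is $-(\tilde{p}_0 - \pi_0)^2$. Holding the messages $m_{-0}^*$ of all other players fixed (so that $\pi_0^*$ is a fixed constant), the government's best response in the $\tilde{p}_0$-coordinate is obtained by maximizing $-(\tilde{p}_0 - \pi_0^*)^2$, which is uniquely achieved at $\tilde{p}_0 = \pi_0^*$. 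I would also verify that this choice is feasible: $\tilde{p}_0 \in \mathbb{R}_{\geq 0}$ and $\pi_0^* \geq 0$ by construction, so $\tilde{p}_0^* = \pi_0^*$ lies in the admissible message set and does not conflict with the constraints \eqref{eqn:constraint_nash_1st_gov}--\eqref{eqn:constraint_nash_2nd_gov}, which involve only $\alpha_0(m)$ and hence only $\tilde{a}_0^0$.

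To conclude, at the GNE message profile $m^*$ the government's message $m_0^*$ must satisfy the best-response condition \eqref{eqn:GNE_defn} (its government analogue), so in particular the $\tilde{p}_0$-coordinate of $m_0^*$ maximizes $u_0$ against $m_{-0}^*$; by the uniqueness of the maximizer of $-(\tilde{p}_0 - \pi_0^*)^2$, this forces $\tilde{p}_0^* = \pi_0^*$. The main thing to be careful about, rather than a genuine obstacle, is the verification that $\pi_0^*$ is truly independent of $\tilde{p}_0$ and that altering $\tilde{p}_0$ leaves both $\alpha_0(m)$ and the product $\pi_0 \cdot \alpha_0(m)$ unchanged; once this decoupling is established the argument reduces to the elementary fact that a strictly concave quadratic in one variable has a unique maximizer. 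I expect no substantive difficulty beyond bookkeeping, since the penalty term was engineered precisely to pin down truthful price reporting by the government.
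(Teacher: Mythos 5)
Your proposal is correct and follows essentially the same route as the paper: the paper's proof likewise notes that $u_0$ is concave in $\tilde{p}_0$ and applies the first-order condition $\frac{\partial u_0}{\partial \tilde{p}_0}\big|_{\tilde{p}_0^*} = 2\,(\tilde{p}_0^* - \pi_0^*) = 0$, which is exactly your unique-maximizer argument for the quadratic penalty $-(\tilde{p}_0 - \pi_0)^2$. Your extra bookkeeping---checking that $\pi_0$ and $\alpha_0(m)$ are independent of $\tilde{p}_0$ and that the maximizer is feasible---is left implicit in the paper but does not change the argument.
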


\begin{proof}
    Since the objective function in Problem \ref{problem3} is concave with respect to the price $\textcolor{black}{\Tilde{p}_0}$, the price $\textcolor{black}{\Tilde{p}_0} ^ *$ at GNE can be using the equation $\frac{\partial u_0}{\partial \textcolor{black}{\Tilde{p}_0}} \big|_{\textcolor{black}{\Tilde{p}_0} ^ *} =$ $2 \cdot (\textcolor{black}{\Tilde{p}_0} ^ * - \pi_0 ^ *) = 0,$
    which yields $\textcolor{black}{\Tilde{p}_0} ^ * = {\pi}_{- 0} ^ *$.
\end{proof}

\textcolor{black}{Similarly, in the next result (Lemma \ref{lemma:truthful_action_proposals}), we refer to Problem \ref{problem2} to establish that, at any GNE, the filters proposed by all social media platforms in $\mathcal{C}_i$ for platform $i$} are the equal, unless the corresponding price proposal is $0$. \textcolor{black}{Furthermore, at every GNE, if one exists, the lower bound proposed by all platforms is the same, unless the corresponding price proposal is $0$.}

\begin{lemma}\label{lemma:truthful_action_proposals}
    Let the message profile $m ^ * \in \mathcal{M}$ be a GNE of the induced game. Then, for $\textcolor{black}{\Tilde{p}_k} ^ i \neq 0$, we have $\Tilde{a}_k ^ {i *} = \Tilde{a}_k ^ {- i *}$ for every social media platform $i \in \mathcal{I}$, for every $k \in \mathcal{D}_{-i}$.
\end{lemma}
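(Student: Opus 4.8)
The plan is to mirror the one–dimensional argument used for the government in Lemma~\ref{lemma:truthful_prices}, applying it now to each price coordinate $\Tilde{p}_k ^ i$ in the message of a platform $i$. At a GNE $m ^ *$ the message $m_i ^ *$ solves Problem~\ref{problem2}, so in particular, for each fixed $k \in \mathcal{D}_{-i}$, it is optimal to vary the single coordinate $\Tilde{p}_k ^ i \in \mathbb{R}_{\geq 0}$ (which the message space permits freely) while holding every other component of $m ^ *$ fixed. I would therefore isolate the dependence of $u_i(m_i, m_{-i})$ on $\Tilde{p}_k ^ i$ and read off the optimality condition for this half-line maximization.

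The key step is to verify that $\Tilde{p}_k ^ i$ enters $u_i(m_i, m_{-i}) = v_i\big(\alpha_j(m) : j \in \mathcal{C}_{-i}\big) - \tau_i(m)$ only through the single penalty summand $\Tilde{p}_k ^ i \cdot (\Tilde{a}_k ^ i - \Tilde{a}_k ^ {-i}) ^ 2$ of \eqref{eqn:payment_function}. This requires checking each remaining term of \eqref{eqn:payment_function}. The valuation $v_i$ and all allocated filters $\alpha_j(m)$ depend on the proposed filters alone, never on any price. The allocated price $\pi_l ^ i = \frac{1}{|\mathcal{C}_l| - 2}\sum_{k' \in \mathcal{C}_{-l} : k' \neq i} \Tilde{p}_l ^ {k'}$ appearing in the third term excludes platform $i$'s own proposal by construction, so it cannot contain $\Tilde{p}_k ^ i$; likewise $\pi_i ^ l$ in the second term is an average of the prices $\Tilde{p}_i ^ {k'}$ proposed for $i$'s \emph{own} filter, a disjoint block of coordinates. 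The first term involves the government's proposed price $\Tilde{p}_0$ (a component of $m_0$) and $\eta_i(m)$, both independent of $\Tilde{p}_k ^ i$. For the remaining case $k = 0$, the only further appearance of $\Tilde{p}_0 ^ i$ is inside the allocated price $\pi_0 = \frac{1}{|\mathcal{I}|}\sum_{j \in \mathcal{I}} \Tilde{p}_0 ^ j$, but $\pi_0$ enters $\tau_0$ rather than $\tau_i$, so $u_i$ is again untouched.

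Having isolated the dependence, $u_i$ is an affine function of $\Tilde{p}_k ^ i$ with slope $-(\Tilde{a}_k ^ i - \Tilde{a}_k ^ {-i}) ^ 2 \leq 0$, where $\Tilde{a}_k ^ {-i}$ does not depend on $\Tilde{p}_k ^ i$ in either subcase. I would then conclude as follows: since $\Tilde{p}_k ^ i$ is optimized over $\mathbb{R}_{\geq 0}$, whenever the slope is strictly negative the unique maximizer is $\Tilde{p}_k ^ i = 0$. Taking the contrapositive, $\Tilde{p}_k ^ {i *} \neq 0$ forces the slope to vanish, i.e. $(\Tilde{a}_k ^ {i *} - \Tilde{a}_k ^ {-i *}) ^ 2 = 0$, which yields $\Tilde{a}_k ^ {i *} = \Tilde{a}_k ^ {-i *}$. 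The two cases $k \in \mathcal{C}_{-i}$ and $k = 0$ are handled uniformly by this single computation.

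The main obstacle is purely the bookkeeping of the key step: confirming that the allocation rules for $\pi_l ^ i$ and $\pi_0$ were deliberately designed to omit platform $i$'s own price proposals, so that $\Tilde{p}_k ^ i$ cannot leak into $u_i$ through any channel other than the quadratic penalty. Once this isolation is secured, the remainder is exactly the affine half-line argument of Lemma~\ref{lemma:truthful_prices}, and no appeal to the filter first-order conditions or to the constraints \eqref{eqn:constraint_nash_1st}--\eqref{eqn:constraint_nash_2nd} is required.
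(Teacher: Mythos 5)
Your proof is correct and coincides with the paper's intended argument: the paper omits this proof as ``similar to the proof of Lemma~\ref{lemma:truthful_prices},'' and your one-coordinate variation in $\Tilde{p}_k^i$ --- after verifying that this price enters $u_i$ only through the penalty summand $-\Tilde{p}_k^i\,(\Tilde{a}_k^i-\Tilde{a}_k^{-i})^2$, since $\pi_l^i$, $\pi_i^l$, and $\pi_0$ are constructed without platform $i$'s own price proposals --- is exactly that analogue. Your contrapositive treatment of the affine objective on the half-line $\mathbb{R}_{\geq 0}$ (a strictly negative slope would force $\Tilde{p}_k^{i*}=0$, so $\Tilde{p}_k^{i*}\neq 0$ forces the slope to vanish) is the correct rigorous rendering of the first-order condition here, where the objective is linear rather than strictly concave as in Lemma~\ref{lemma:truthful_prices}.
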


\begin{proof}
    The proof is similar to the proof of Lemma \ref{lemma:truthful_prices}, and thus, \textcolor{black}{due to space limitations, it is omitted.}
\end{proof}

\textcolor{black}{Next, we use the properties established in Lemmas \ref{lemma:truthful_prices} and \ref{lemma:truthful_action_proposals} to show that our proposed mechanism is budget balanced at any GNE, if one exists, i.e., the social planner redistributes all the payments it collects from the players as subsidies to the players.}%i.e., $\sum_{i \in \mathcal{J}} \textcolor{black}{\tau}_i(m^*) = 0$.}

\begin{theorem}[\textbf{Budget Balance}] \label{budget_balance}
    Consider any GNE $m ^ * \in \mathcal{M}$ of the induced game. Then, the proposed mechanism is budget balanced, i.e., $\sum_{i \in \mathcal{J}} \textcolor{black}{\tau}_i(m ^ *) = 0$.
\end{theorem}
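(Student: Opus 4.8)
The plan is to substitute the platform taxes \eqref{eqn:payment_function} and the government's investment \eqref{eqn:payment_function_gov} into $\sum_{i\in\mathcal{J}}\tau_i(m^*)$ and to sort the resulting terms into four groups: the trust subsidies $-\Tilde{p}_0\cdot\eta_i(m)$, the mutual payment/subsidy terms built from the allocated prices $\pi_l^i$, the quadratic penalty terms, and the government's linear investment $\pi_0\cdot\alpha_0(m)$ together with its own penalty. I would show that the mutual terms cancel identically, that the penalties vanish at the equilibrium, and that the two surviving linear terms cancel, invoking Lemmas \ref{lemma:truthful_prices} and \ref{lemma:truthful_action_proposals}.

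The step I expect to be the crux is the cancellation of the mutual payment/subsidy terms, which in fact holds for \emph{every} message profile, not only at a GNE. Summing the second and third terms of \eqref{eqn:payment_function} over all platforms gives $-\sum_{i}\sum_{l\in\mathcal{C}_{-i}}\pi_i^l\cdot\alpha_i(m)$ and $+\sum_{i}\sum_{l\in\mathcal{C}_{-i}}\pi_l^i\cdot\alpha_l(m)$, respectively. Swapping the dummy indices $i\leftrightarrow l$ in the first sum and using the symmetric-competition assumption ($l\in\mathcal{C}_{-i}\iff i\in\mathcal{C}_{-l}$) turns its summand $\pi_i^l\cdot\alpha_i(m)$ into $\pi_l^i\cdot\alpha_l(m)$ over the same index set, so the two sums coincide and their difference is zero. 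The delicate point is the index bookkeeping: one must verify that the price $\pi_l^i$ charged to the payer $i$ for $\alpha_l$ is the very same allocated quantity that platform $l$ receives as a subsidy, which is guaranteed because, by construction, $\pi_l^i$ is independent of the price proposals of both $i$ and $l$.

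Next I would collapse the trust subsidies using the normalization inside \eqref{eqn:defn_of_eta}. Since $\alpha_0(m)\le 1$ and $n_i\Tilde{h}_i/\sum_k n_k\Tilde{h}_k\le 1$, the minimum in \eqref{eqn:defn_of_eta} is never binding, so $\eta_i(m)=\frac{n_i\Tilde{h}_i}{\sum_{k\in\mathcal{I}} n_k\Tilde{h}_k}\,\alpha_0(m)$ and the weights sum to one; hence $\sum_{i\in\mathcal{I}}\eta_i(m)=\alpha_0(m)$ and the trust subsidies reduce to $-\Tilde{p}_0^*\cdot\alpha_0(m^*)$. For the quadratic penalties, Lemma \ref{lemma:truthful_action_proposals} gives $\Tilde{a}_k^{i*}=\Tilde{a}_k^{-i*}$ whenever $\Tilde{p}_k^i\neq 0$, so every product $\Tilde{p}_k^i(\Tilde{a}_k^i-\Tilde{a}_k^{-i})^2$ vanishes at the GNE, and Lemma \ref{lemma:truthful_prices} forces $(\Tilde{p}_0^*-\pi_0^*)^2=0$ in \eqref{eqn:payment_function_gov}.

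Finally I would combine the survivors. After the above reductions, $\sum_{i\in\mathcal{I}}\tau_i(m^*)=-\Tilde{p}_0^*\cdot\alpha_0(m^*)$ and $\tau_0(m^*)=\pi_0^*\cdot\alpha_0(m^*)$; since Lemma \ref{lemma:truthful_prices} equates $\Tilde{p}_0^*=\pi_0^*$, adding the two yields $\sum_{i\in\mathcal{J}}\tau_i(m^*)=(\pi_0^*-\Tilde{p}_0^*)\,\alpha_0(m^*)=0$, as claimed. It is worth noting that only the penalty cancellations and the final price matching use the equilibrium hypothesis; the structural cancellations of the payment/subsidy and trust terms hold at any message profile.
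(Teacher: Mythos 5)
Your proof is correct and takes essentially the same route as the paper's: Lemmas \ref{lemma:truthful_prices} and \ref{lemma:truthful_action_proposals} annihilate the quadratic penalties, the identity $\sum_{i\in\mathcal{I}}\eta_i(m^*)=\alpha_0(m^*)$ cancels the trust subsidies against the government's investment $\pi_0^*\cdot\alpha_0(m^*)$, and the mutual payment/subsidy terms cancel via the index swap under symmetric competition --- the paper's proof is just a terser rendering of this exact computation. One minor caveat: your claim that the minimum in \eqref{eqn:defn_of_eta} is \emph{never} binding requires $\alpha_0(m)\leq 1$, which holds at a GNE by feasibility but not at arbitrary message profiles (the proposals $\Tilde{a}_0^k$ range over all of $\mathbb{R}$), so your closing remark that the trust-term cancellation is valid at every $m$ is slightly overstated --- though the paper makes the same overstatement, and it is harmless for the theorem, which only concerns equilibria.
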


\begin{proof}
    From Lemmas \ref{lemma:truthful_prices} and \ref{lemma:truthful_action_proposals}, the tax $\textcolor{black}{\tau}_i ^ * = \textcolor{black}{\tau}_i(m ^ *)$ for social media platform $i$ at GNE is $\textcolor{black}{\tau}_i ^ * = - \textcolor{black}{\Tilde{p}_0} ^ * \cdot \textcolor{black}{\eta_i}(m ^ *) - \sum_{l \in \mathcal{C}_{- i}} \textcolor{black}{\pi_i} ^ l \cdot \textcolor{black}{\alpha}_{i}(m^*)  + \sum_{l \in \mathcal{C}_{- i}} \textcolor{black}{\pi_l} ^ {i} \cdot \textcolor{black}{\alpha}_l(m^*).$ The tax $\textcolor{black}{\tau}_0 ^ *$ for the government at GNE is $\textcolor{black}{\tau}_0 ^ * = \textcolor{black}{\Tilde{p}_0} ^ * \cdot \textcolor{black}{\alpha}_0(m ^ *),$ where $\textcolor{black}{\Tilde{p}_0}^*$ is the price per unit change on average trust at GNE.
    Since $\sum_{i \in \mathcal{I}} \textcolor{black}{\eta_i}(m) = \textcolor{black}{\alpha}_0(m),$ for all $m \in \mathcal{M}$, then at GNE we have
        %\begin{equation}
        $
            \sum_{i \in \mathcal{J}} \textcolor{black}{\tau}_i ^ * = \sum_{i \in \mathcal{I}} \Big[- \sum_{l \in \mathcal{C}_{- i}} \textcolor{black}{\pi_i} ^ l \cdot \textcolor{black}{\alpha}_i(m^*)
            + \sum_{l \in \mathcal{C}_{- i}} \textcolor{black}{\pi_l} ^ {i} \cdot \textcolor{black}{\alpha}_l(m^*)\Big]=0.
            $
        %\end{equation}
\end{proof}
\color{black}

In the next result (Lemma \ref{lemma:feasibility}), we establish that every GNE, $m^* \in \mathcal{M}$, if one exists, of the induced game leads to an allocation of filters for the platforms and a lower bound for the government that forms a feasible solution of Problem \ref{problem1}. In other words, every GNE of the induced game ensures that all constraints of Problem \ref{problem1} are satisfied.

\color{black}

\begin{lemma}[\textbf{Feasibility}] \label{lemma:feasibility}
    Every GNE message profile $m ^ * \in \mathcal{M}$ leads to a filter profile $\big(\textcolor{black}{\alpha}_1(m ^ *), \dots, \textcolor{black}{\alpha}_{|\mathcal{I}|}(m ^ *)\big)$ and lower bound $\textcolor{black}{\alpha}_0(m ^ *)$, which is a feasible solution of Problem \ref{problem1}.
\end{lemma}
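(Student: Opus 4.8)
The plan is to verify that the filter profile $\big(\alpha_1(m^*),\dots,\alpha_{|\mathcal{I}|}(m^*)\big)$, the lower bound $\alpha_0(m^*)$, and the induced tax profile $\tau(m^*)$ jointly satisfy each of the four constraints \eqref{eqn:constraint_1st}--\eqref{eqn:constraint_tax_2nd} of Problem \ref{problem1}. One of these is immediate: constraint \eqref{eqn:constraint_tax_2nd} is precisely the budget-balance identity already established in Theorem \ref{budget_balance}, so $\sum_{i\in\mathcal{J}}\tau_i(m^*)=0$ holds at any GNE with no further work. The remaining effort splits across the box constraints \eqref{eqn:constraint_1st}, the aggregate-trust constraint \eqref{eqn:constraint_2nd}, and the investment constraint \eqref{eqn:constraint_tax_1st}.

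For the box constraints \eqref{eqn:constraint_1st}, I would invoke the characterization preceding this result, which identifies any GNE $m^*$ with a profile of optimal solutions to Problems \ref{problem2} and \ref{problem3}. Since \eqref{eqn:constraint_nash_1st} and \eqref{eqn:constraint_nash_1st_gov} are enforced as hard constraints (the social planner drives the tax to $+\infty$ upon violation), every optimal message must respect them, giving $0\le\alpha_i(m^*)\le 1$ for each platform $i\in\mathcal{I}$ and $0\le\alpha_0(m^*)\le 1$ for the government, which is exactly \eqref{eqn:constraint_1st}.

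The aggregate-trust constraint \eqref{eqn:constraint_2nd} is the crux. Here I would combine two facts. First, the allocation identity $\sum_{i\in\mathcal{I}}\eta_i(m)=\alpha_0(m)$, already used in the proof of Theorem \ref{budget_balance} and following from the definition \eqref{eqn:defn_of_eta} when the minimum is not truncated at $1$. Second, the feasibility constraint \eqref{eqn:constraint_nash_2nd} of Problem \ref{problem2}, which at the GNE reads $n_i\cdot h_i\big(\alpha_i(m^*)\big)\ge \eta_i(m^*)$ for each platform $i$. Summing this inequality over $i\in\mathcal{I}$ and applying the identity yields $\sum_{i\in\mathcal{I}} n_i\cdot h_i\big(\alpha_i(m^*)\big)\ge \sum_{i\in\mathcal{I}}\eta_i(m^*)=\alpha_0(m^*)$, which rearranges to \eqref{eqn:constraint_2nd}.

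Finally, for the investment constraint \eqref{eqn:constraint_tax_1st}, I would start from the government's tax \eqref{eqn:payment_function_gov} and apply Lemma \ref{lemma:truthful_prices}: at the GNE $\Tilde{p}_0^*=\pi_0^*$, so the penalty term vanishes and $\tau_0(m^*)=\pi_0^*\cdot\alpha_0(m^*)$. Nonnegativity then follows since $\pi_0^*\ge 0$ (prices are nonnegative) and $\alpha_0(m^*)\ge 0$ by the box constraint just established, while the upper bound $\tau_0(m^*)\le b_0$ follows from the government's own budget constraint \eqref{eqn:constraint_nash_2nd_gov}, which $m^*$ satisfies by optimality in Problem \ref{problem3}. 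I expect the main obstacle to be the careful bookkeeping behind \eqref{eqn:constraint_2nd}: one must argue that the peer-monitored commitments $n_i\,h_i(\alpha_i)\ge\eta_i$ enforced through Assumption \ref{assumption:assumed_knowledge_b} genuinely bind-or-slack at the GNE, and that the truncation at $1$ in \eqref{eqn:defn_of_eta} does not corrupt the summation identity, so that the per-platform trust guarantees aggregate exactly into the government's allocated lower bound.
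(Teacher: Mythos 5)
Your proposal is correct and follows essentially the same route as the paper: GNE messages satisfy the hard constraints \eqref{eqn:constraint_nash_1st}--\eqref{eqn:constraint_nash_2nd} and \eqref{eqn:constraint_nash_1st_gov}--\eqref{eqn:constraint_nash_2nd_gov}, budget balance comes from Theorem \ref{budget_balance}, and summing $n_i \cdot h_i\big(\alpha_i(m^*)\big) \geq \eta_i(m^*)$ with the identity $\sum_{i \in \mathcal{I}} \eta_i(m^*) = \alpha_0(m^*)$ yields \eqref{eqn:constraint_2nd}. You are in fact slightly more careful than the paper on \eqref{eqn:constraint_tax_1st} (invoking Lemma \ref{lemma:truthful_prices} so the penalty term in \eqref{eqn:payment_function_gov} vanishes and $\tau_0(m^*) = \pi_0^* \cdot \alpha_0(m^*) \leq b_0$), and the truncation worry you flag resolves easily since $\alpha_0(m^*) \leq 1$ and $n_i \Tilde{h}_i / \sum_k n_k \Tilde{h}_k \leq 1$ together ensure the minimum in \eqref{eqn:defn_of_eta} is never attained at $1$.
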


\begin{proof}
    Every GNE message profile $m ^ *$ satisfies \eqref{eqn:constraint_nash_1st} - \eqref{eqn:constraint_nash_2nd} and \eqref{eqn:constraint_nash_1st_gov} - \eqref{eqn:constraint_nash_2nd_gov}. From Theorem \ref{budget_balance}, $\sum_{i \in \mathcal{J}} \textcolor{black}{\tau}_i(m ^ *) = 0$. For each $i \in \mathcal{I}$, $\textcolor{black}{\eta_i}(m) \leq n_i \cdot h_i(\textcolor{black}{\alpha}_i(m)),$ and $\sum_{i \in \mathcal{I}} \textcolor{black}{\eta_i}(m) = \textcolor{black}{\alpha}_0(m)$. Hence, $\sum_{i \in \mathcal{I}} h_i(\textcolor{black}{\alpha}_i(m)) \geq \textcolor{black}{\alpha}_0(m).$
\end{proof}

In the next result (Lemma \ref{lemma:achievability}), we establish that every social media platform $i\in \mathcal{I}$ can unilaterally deviate in the message $m_i \in \mathcal{M}_i$, to achieve any desired allocation of filters for every competing platform, including itself. \textcolor{black}{This property of our mechanism ensures that each platform $i \in \mathcal{I}$ can attain any filter $\hat{a}_i \in \mathcal{A}$, irrespective of the filters proposed by the competing platforms.}
%Furthermore, it can achieve any feasible valuation $v_i\big(\alpha_k(m): k \in \mathcal{C}_i\big)$ that it seeks, as long as it pays an appropriate tax $\textcolor{black}{\tau}_i(m)$.}

\begin{lemma} \label{lemma:achievability}
    Given the message profile $m_{- i} \in \mathcal{M}_{- i}$, the social media platform $i \in \mathcal{I}$ can unilaterally deviate in their message $m_i \in \mathcal{M}_i$ to attain any filter \textcolor{black}{$\hat{a}_k \in \mathcal{A}$ as the allocated filter $\textcolor{black}{\alpha}_k(m) \in \mathcal{S}_k(m)$, for all $k \in \mathcal{C}_{i}$.}
\end{lemma}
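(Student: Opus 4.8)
The plan is to exploit the linearity of the allocation rule $\alpha_k(m) = \sum_{j \in \mathcal{C}_k} \Tilde{a}_k^j / |\mathcal{C}_k|$ together with the fact that platform $i$'s own proposals enter these averages in a decoupled fashion. First I would fix an arbitrary $m_{-i} \in \mathcal{M}_{-i}$ and an arbitrary target profile $(\hat{a}_k : k \in \mathcal{C}_i)$ with each $\hat{a}_k \in \mathcal{A}$. For every $k \in \mathcal{C}_i$, the symmetry of competition (if $k \in \mathcal{C}_i$ then $i \in \mathcal{C}_k$) guarantees that platform $i$'s proposal $\Tilde{a}_k^i$ is one of the summands defining $\alpha_k(m)$, carrying the strictly positive weight $1/|\mathcal{C}_k|$. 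I would therefore split $\alpha_k(m) = \Tilde{a}_k^i/|\mathcal{C}_k| + R_k$, where $R_k := \sum_{j \in \mathcal{C}_k \setminus \{i\}} \Tilde{a}_k^j / |\mathcal{C}_k|$ is completely determined by $m_{-i}$.

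The key structural observation is the decoupling of controls. The components $(\Tilde{a}_k^i : k \in \mathcal{C}_i)$ are independent real-valued entries of the message $m_i$, and each $\Tilde{a}_k^i$ influences only the allocation $\alpha_k(m)$, since any other allocated filter $\alpha_{k'}(m)$ averages only the proposals submitted for platform $k'$. Hence the map $(\Tilde{a}_k^i)_{k \in \mathcal{C}_i} \mapsto (\alpha_k(m))_{k \in \mathcal{C}_i}$ is an affine bijection whose Jacobian is diagonal with nonzero entries $1/|\mathcal{C}_k|$. Inverting it, I would set $\Tilde{a}_k^i = |\mathcal{C}_k|\,(\hat{a}_k - R_k)$ for each $k \in \mathcal{C}_i$; because proposals range over all of $\mathbb{R}$, this is always an admissible message component, and it produces $\alpha_k(m) = \hat{a}_k$ simultaneously for every $k \in \mathcal{C}_i$, and in particular $\alpha_i(m) = \hat{a}_i$ irrespective of the competitors' proposals.

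It then remains to confirm that the attained filters are feasible, i.e. $\alpha_k(m) = \hat{a}_k \in \mathcal{S}_k(m)$. Since $h_k$ is strictly increasing, $\mathcal{S}_k(m) = \{a \in \mathcal{A} : n_k h_k(a) \geq \eta_k(m)\}$ is an upper subinterval of $\mathcal{A}$ whose left endpoint is controlled by $\eta_k(m)$. Here I would invoke two further levers that platform $i$ has not yet spent: the proposed trust $\Tilde{h}_i$, which enters the denominator of every $\eta_k(m)$, and the lower-bound proposal $\Tilde{a}_0^i$, which enters $\alpha_0(m)$. Crucially, neither of these appears in any $\alpha_k(m)$ for $k \in \mathcal{C}_i$, so they are independent of the $\Tilde{a}_k^i$ chosen above; decreasing either one shrinks each $\eta_k(m)$ and thereby enlarges every $\mathcal{S}_k(m)$, allowing any admissible target to be accommodated without perturbing the filter allocations.

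I expect the main obstacle to be precisely this feasibility step, because $\mathcal{S}_k(m)$ is itself a function of the whole message through $\eta_k(m)$, and $\eta_k(m)$ entangles all platforms' trust proposals with the government's allocated lower bound $\alpha_0(m)$. The decoupling argument disposes of the filter values cleanly, so the delicate point is verifying that the auxiliary levers $(\Tilde{h}_i, \Tilde{a}_0^i)$ move the feasibility thresholds \emph{without} re-coupling into the filter allocations — which is exactly what the structure of the allocation rule ensures, since neither quantity occurs in $\alpha_k(m)$ for any $k \in \mathcal{C}_i$.
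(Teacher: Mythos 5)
Your proof is correct and follows essentially the same route as the paper's: invert the linear averaging rule by setting $\Tilde{a}_k^i = |\mathcal{C}_k|\,\hat{a}_k - \sum_{l \in \mathcal{C}_k : l \neq i} \Tilde{a}_k^l$ for each $k \in \mathcal{C}_i$ (your inversion is in fact the algebraically correct one — the paper's displayed formula $\Tilde{a}^i_k = \hat{a}_k - \sum_{l \in \mathcal{C}_{k} : l \neq i} \Tilde{a}^l_k/(|\mathcal{C}_{k}| - 1)$ does not quite average back to $\hat{a}_k$ and appears to be a typo), and then use the lower-bound proposal $\Tilde{a}_0^i$ for feasibility, exactly as the paper does by choosing $\Tilde{a}_0^i = -\sum_{l \in \mathcal{J}_{-i}} \Tilde{a}_0^l$ so that $\alpha_0(m) = 0$, which forces $\eta_k(m) = 0$ and hence $\mathcal{S}_k(m) = \mathcal{A}$ for all $k$. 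One sign error in your side remark: decreasing $\Tilde{h}_i$ does \emph{not} shrink every $\eta_k(m)$ — it shrinks only $\eta_i(m)$, while for $k \neq i$ it shrinks the denominator $\sum_j n_j \Tilde{h}_j$ and therefore \emph{enlarges} $\eta_k(m)$ — but this is harmless, since the $\Tilde{a}_0^i$ lever alone suffices: because $\Tilde{a}_0^i$ ranges over all of $\mathbb{R}$, you can drive $\alpha_0(m)$ exactly to $0$ (not merely decrease it), at which point every feasibility set equals $\mathcal{A}$ and any target profile is admissible.
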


\begin{proof}
    Let $m_{- i} = \big(m_0, \dots, m_{i-1}, m_{i+1}, \dots, m_{|\mathcal{I}|}\big)$ be the message profile of all players in $\mathcal{J}_{-i}$. Then, platform $i$ can propose a filter $\Tilde{a}^i_k = \hat{a}_k - \sum_{l \in \mathcal{C}_{k} : l \neq i} \frac{\Tilde{a}^l_k}{|\mathcal{C}_{k}| - 1}$,
    \color{black}
    to ensure that $\textcolor{black}{\alpha}_k(m) = \hat{a}_k$ for each $k \in \mathcal{C}_i$. Moreover, platform $i$ can propose a lower bound $\Tilde{a}^i_0 = - \sum_{l \in \mathcal{J}_{-i}}\Tilde{a}^l_0$ for the government, 
    %given by
    %\begin{equation}
        %\tilde{a}^i_0 = 
        %\dfrac{|\mathcal{J}| \cdot h_i(\textcolor{black}{\alpha}_i)}{\textcolor{black}{\Tilde{h}_i}} \cdot \sum_{k \in \mathcal{I}} n_k \cdot \textcolor{black}{\Tilde{h}_k} 
        %,- \sum_{l \in \mathcal{J}_{-i}}\Tilde{a}^l_0
   %\end{equation}
    to ensure that $\textcolor{black}{\alpha}_0(m) = 0$, and subsequently, $\textcolor{black}{\alpha}_k(m) = \hat{a}_k \in \mathcal{S}_k(m)$ for all $k \in \mathcal{C}_i$.
\end{proof}

Next, we establish that, at any GNE, \textcolor{black}{if one exists,} of the induced game the allocated filters for all platforms and the allocated lower bound for the government result in the optimal solution of Problem \ref{problem1}.

\begin{theorem}[\textbf{Strong Implementation}] \label{thm:implementation}
    Consider any GNE $m ^ * \in \mathcal{M}$ of the induced game. Then, the allocated filter profile $\big(\textcolor{black}{\alpha}_1(m ^ *), \dots, \textcolor{black}{\alpha}_{|\mathcal{I}|}(m ^ *)\big)$ and the allocated lower bound $\textcolor{black}{\alpha}_0(m ^ *)$ at equilibrium is equal to the optimal solution $a ^ {*o}$ of Problem \ref{problem1}.
\end{theorem}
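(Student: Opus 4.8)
The plan is to reduce Problem \ref{problem1} at a GNE to a pure concave program in the allocation $a$, and then show that the allocation produced by any GNE satisfies the (necessary and sufficient) KKT conditions of that program. First I would observe that, by Theorem \ref{budget_balance}, $\sum_{i\in\mathcal{J}}\tau_i(m^*)=0$, so the social welfare realized at the GNE equals $v_0(\alpha_0(m^*))+\sum_{i\in\mathcal{I}}v_i(\alpha_k(m^*):k\in\mathcal{C}_i)$, i.e.\ the objective of Problem \ref{problem1} with the taxes cancelled. By Lemma \ref{lemma:feasibility} the GNE allocation $a^*=(\alpha_0(m^*),\dots,\alpha_{|\mathcal{I}|}(m^*))$ is feasible, and by Assumptions \ref{assumption:filter_compatibility}--\ref{assumption:government_valuation} the objective is concave over a convex, compact feasible set, so (as already noted for Problem \ref{problem1}) the optimum $a^{*o}$ is unique and fully characterized by its KKT system. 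It therefore suffices to verify that $a^*$ satisfies that system; since this will hold for \emph{every} GNE and the optimum is unique, the implementation is strong.

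Next I would extract the equilibrium price identities from the players' first-order conditions. By Lemma \ref{lemma:achievability} platform $i$ can freely move $\alpha_i$ and each $\alpha_l$, $l\in\mathcal{C}_{-i}$, through its message, so the stationarity conditions of Problem \ref{problem2} are necessary at the GNE. The key structural fact is that the price $\pi_l^i$ allocated to $i$ for the filter $\alpha_l$ is an average over the proposals of platforms other than $i$ and $l$, hence independent of $i$'s own report; moreover, by Lemma \ref{lemma:truthful_action_proposals} the penalty terms $(\tilde a_l^i-\tilde a_l^{-i})^2$ and their derivatives vanish at the GNE. Consequently the stationarity of $u_i$ with respect to the component controlling $\alpha_l$ collapses to $\partial v_i/\partial\alpha_l=\pi_l^i$: the price $i$ pays for $l$'s filter equals $i$'s marginal valuation of it. Applying this symmetrically to each competitor $l$ that pays for $i$'s filter yields $\pi_i^l=\partial v_l/\partial\alpha_i$, so the total subsidy rate $\sum_{l\in\mathcal{C}_{-i}}\pi_i^l$ that $i$ receives for its own filter equals the total marginal externality $\sum_{l\in\mathcal{C}_{-i}}\partial v_l/\partial\alpha_i$.

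I would then assemble these into the centralized gradient and pin down the multipliers. Substituting $\partial\tau_i/\partial\alpha_i=-\sum_{l\in\mathcal{C}_{-i}}\pi_i^l$ into the stationarity of Problem \ref{problem2} with respect to $\alpha_i$ converts the private gradient $\partial v_i/\partial\alpha_i$ into the full social gradient $\sum_{l\in\mathcal{C}_i}\partial v_l/\partial\alpha_i$, which is exactly the $a_i$-gradient of Problem \ref{problem1}'s objective; the trust constraint \eqref{eqn:constraint_nash_2nd} contributes $\mu_i\,n_i h_i'(\alpha_i)$, matching the $a_i$-term of the aggregate constraint \eqref{eqn:constraint_2nd} with shadow price $\mu_i$. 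To show the $I$ separate multipliers $\mu_i$ collapse to the single multiplier $\lambda$ of Problem \ref{problem1}, I would use the stationarity with respect to $\tilde h_i$: since $\eta_i$ enters $u_i$ only through the subsidy $+\tilde p_0\,\eta_i$ and the constraint term $-\mu_i\,\eta_i$, stationarity forces $(\tilde p_0-\mu_i)\,\partial\eta_i/\partial\tilde h_i=0$, hence $\mu_i=\tilde p_0^*=\pi_0^*$ for every $i$ by Lemma \ref{lemma:truthful_prices}; this common value is the shadow price $\lambda$. Finally, the government's stationarity in Problem \ref{problem3} with respect to $\alpha_0$ (with the penalty eliminated by Lemma \ref{lemma:truthful_prices} and the budget constraint \eqref{eqn:constraint_nash_2nd_gov} slack, so $\nu=0$) gives $\partial v_0/\partial\alpha_0=\pi_0^*=\lambda$, reproducing the $a_0$-condition of Problem \ref{problem1}. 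With complementary slackness and the box-constraint multipliers matching term by term, $a^*$ satisfies the KKT system of the convex Problem \ref{problem1}, whose solution is unique, so $a^*=a^{*o}$.

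The hard part is the multiplier bookkeeping rather than any single computation: one must verify that the averaging-plus-penalty design genuinely makes each agent's own price non-manipulable, so that prices coincide with marginal valuations at the GNE, and then that each platform's privately held shadow price $\mu_i$ of its individual trust constraint is driven to the common government price $\pi_0^*$, which is what lets the $I$ decoupled constraints \eqref{eqn:constraint_nash_2nd} aggregate into the single constraint \eqref{eqn:constraint_2nd}. Care is also needed in handling the $\min\{\cdot,1\}$ in $\eta_i$ (ensuring $\partial\eta_i/\partial\tilde h_i\neq 0$ on the relevant branch) and the boundary cases where a filter sits at $0$ or $1$, so that the stationarity relations used are the correct KKT conditions with their sign-constrained multipliers; and one must justify that the government's budget is slack at the GNE, consistent with the feasibility assumed for Problem \ref{problem1}.
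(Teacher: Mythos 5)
Your proposal follows essentially the same route as the paper's proof: both verify that the GNE allocation satisfies the KKT system of the convex Problem \ref{problem1} by matching the stationarity and complementary-slackness conditions of Problems \ref{problem2} and \ref{problem3} against the centralized ones, using Lemma \ref{lemma:truthful_action_proposals} to kill the quadratic penalty terms, the price identities $\pi_l^i = \partial v_i/\partial \alpha_l$ (the paper's \eqref{eqn:implementation_1_2}) to convert each private gradient into the social gradient $\sum_{k\in\mathcal{C}_i}\partial v_k/\partial \alpha_i$, and the common value $\nu_i^i = \Tilde{p}_0^{\,*}$ (your $\mu_i = \Tilde{p}_0^{\,*}$, obtained via stationarity in $\Tilde{h}_i$ where the paper differentiates directly in $\eta_i(m)$ --- an immaterial difference, and your attention to the $\min\{\cdot,1\}$ branch and the $n_i$ factors is if anything more careful than the paper's write-up) to collapse the $I$ individual trust multipliers into the single centralized multiplier $\nu$.

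The one place you depart from the paper is also the one genuine flaw: you assume the government's budget constraint \eqref{eqn:constraint_nash_2nd_gov} is slack at the GNE so that its multiplier vanishes, and you flag this as needing justification --- but it cannot be justified in general (nothing prevents the optimum from exhausting $b_0$ when $v_0$ is steep relative to the budget), and it is also unnecessary. The paper instead carries the budget multiplier $\omega_0^0$ through the government's stationarity condition \eqref{imp_gov_1} together with its complementary-slackness condition \eqref{im_gov_4}, and simply identifies $\omega = \omega_0^0$ in the centralized condition \eqref{imp_central_2}, which handles the binding case as well. Replacing your slackness claim with this multiplier matching closes the gap without changing anything else in your argument.
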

%   our mechanism strongly implements the social-welfare maximizing solution in every equilibrium of the induced game... the efficient allocation is implemented by all equilibria of the induced game...

\begin{proof}
    Let $\textcolor{black}{\alpha}(m ^ *) = \big(\textcolor{black}{\alpha}_1(m ^ *), \dots, \textcolor{black}{\alpha}_{|\mathcal{I}|}(m ^ *)\big)$. Then, the GNE message profile $m ^ *$ satisfies, for platform $i \in \mathcal{I}$, the following Kush-Kahn-Tucker (KKT) conditions for optimality:
        \begin{align}
            \frac{\partial v_i}{\partial \textcolor{black}{\alpha}_i} \Bigg|_{\textcolor{black}{\alpha}(m ^ *)} + \sum_{l \in \mathcal{I}_{- i}} \textcolor{black}{\pi^i_l} - \lambda_i ^ i + \mu_i ^ i + \nu_i ^ {i} & \cdot \frac{\partial h_i}{\partial \textcolor{black}{\alpha}_i} \Bigg|_{\textcolor{black}{\alpha}(m ^ *)} = 0, \label{eqn:implementation_1} \\
            \frac{\partial v_i}{\partial \textcolor{black}{\alpha}_l} \Bigg|_{\textcolor{black}{\alpha}(m ^ *)} - \textcolor{black}{\pi_l}^{i} &= 0, \quad \forall l \in  \mathcal{C}_{- i}, \label{eqn:implementation_1_2} \\
        %
            %\frac{\partial v_i}{\partial \textcolor{black}{\alpha}_{i+1}} \Bigg|_{\textcolor{black}{\alpha}(m ^ *)} - \textcolor{black}{\Tilde{p}_{i + 1}} ^ {i + 2} & = 0, \label{eqn:implementation_1_2} \\
        %
            \textcolor{black}{\Tilde{p}_0} ^ * - \nu_i ^ i & = 0, \label{eqn:implementation_2} \\
            \lambda_i ^ i \cdot \big(\textcolor{black}{\alpha}_i(m ^ *) - 1\big) & = 0, \label{eqn:implementation_3} \\
            \mu_i ^ i \cdot \textcolor{black}{\alpha}_i(m ^ *) & = 0, \label{eqn:implementation_4} \\
            \nu_i ^ i \cdot \big(\textcolor{black}{\eta_i}(m ^ *) - h_i(\textcolor{black}{\alpha}_i(m ^ *))\big) & = 0, \label{eqn:implementation_5} \\
            \lambda_i ^ i, \mu_i ^ i, \nu_i ^ i & \geq 0, \label{eqn:implementation_6}
        \end{align}
    where \eqref{eqn:implementation_1} - \eqref{eqn:implementation_2} are the derivatives of the Lagrangian of platform $i$ with respect to $\textcolor{black}{\alpha}(m)$ and $\textcolor{black}{\eta_i}(m)$, for Problem \ref{problem2}, and \eqref{eqn:implementation_3} - \eqref{eqn:implementation_6} are constraints on the Lagrange multipliers $(\lambda_i^i,\mu_i^i,\nu_i^i)$. From \eqref{eqn:implementation_2},  $\nu_i ^ i = \textcolor{black}{\Tilde{p}_0} ^ *$ for all $i \in \mathcal{I}$.
    Substituting \eqref{eqn:implementation_1_2} in \eqref{eqn:implementation_1}, we have
        \begin{equation}
            \sum_{k \in \mathcal{C}_{i}} \frac{\partial v_k}{\partial \textcolor{black}{\alpha}_i} \Bigg|_{\textcolor{black}{\alpha}(m ^ *)} - \lambda_i ^ i + \mu_i ^ i + \nu_i ^ i \cdot \frac{\partial h_i}{\partial \textcolor{black}{\alpha}_i} \Bigg|_{\textcolor{black}{\alpha}(m ^ *)} = 0,
        \end{equation}
        for all $i \in \mathcal{I}$. Similarly, the KKT conditions for Problem \ref{problem3} are:
        \begin{align}
            \frac{\partial v_0}{\partial \textcolor{black}{\alpha}_0} \Bigg|_{\textcolor{black}{\alpha}_0(m ^ *)} - \textcolor{black}{\Tilde{p}_0} ^ * - \lambda_0 ^ 0 + \mu_0 ^ 0 + \omega_0 ^ 0 \cdot \textcolor{black}{\Tilde{p}_0} ^ * & = 0, \label{imp_gov_1}\\
            \lambda_0 ^ 0 \cdot \big(\textcolor{black}{\alpha}_0(m ^ *) - 1\big) & = 0, \label{imp_gov_2} \\
            \mu_0 ^ 0 \cdot \textcolor{black}{\alpha}_0(m ^ *) & = 0, \label{imp_gov_3}\\
            \omega_0 ^ 0 \cdot \big(\textcolor{black}{\Tilde{p}_0} ^ * \cdot \textcolor{black}{\alpha}_0(m ^ *) - b_0\big) & = 0, \label{im_gov_4}\\
            \lambda_0 ^ 0, \mu_0 ^ 0, \omega_0 ^ 0 & \geq 0, \label{imp_gov_5}
        \end{align}
    where \eqref{imp_gov_1} is the derivative of the Lagrangian, and  \eqref{imp_gov_2} - \eqref{imp_gov_5} are constraints on the Lagrange multipliers $(\lambda_0^0, \mu_0^0, \omega_0^0)$.
    
    The optimal solution $a ^ {*o} = \big(a ^ {*o}_0, a ^ {*o}_1, \dots, a ^ {*o}_{|\mathcal{I}|} \big)$ of Problem \ref{problem1} satisfies the following KKT conditions:
        \begin{align}
            \sum_{k \in \mathcal{C}_{i}} \frac{\partial v_k}{\partial a_i} \Bigg|_{a_i ^ {*o}} - \lambda_i + \mu_i + & \nu \cdot \frac{\partial h_i}{\partial a_i} \Bigg|_{a_i ^ {*o}} = 0, \quad \forall i \in \mathcal{I}, \label{imp_central_1}\\
            \frac{\partial v_0}{\partial a_0} \Bigg|_{a_0 ^ {*o}} - \lambda_0 + \mu_0 & - \nu - \omega \cdot \pi_0 = 0, \label{imp_central_2}\\
            \lambda_i \cdot (a_i ^ {*o} - 1) & = 0, \qquad \forall i \in \mathcal{J}, \label{imp_central_3}\\
            \mu_i \cdot a_i ^ {*o} & = 0, \qquad \forall i \in \mathcal{J}, \label{imp_central_4}\\
            \nu \cdot \big(a_0 ^ {*o} - h_i(a_i ^ {*o})\big) & = 0, \label{imp_central_5}\\
            \omega \cdot (\pi_0 \cdot a_0 ^ {*o} - b_0) & = 0, \label{imp_central_6}\\
            \lambda_i, \mu_i, \omega, \nu & \geq 0, \qquad \forall i \in \mathcal{J}, \label{imp_central_7}
        \end{align}
    where \eqref{imp_central_1} - \eqref{imp_central_2} are the derivatives of the Lagrangian, and \eqref{imp_central_3} - \eqref{imp_central_4} are constraints on the Lagrange multipliers $(\lambda_i, \mu_i, \omega, \nu : i \in \mathcal{J})$.
    By setting $\pi_0 = \textcolor{black}{\Tilde{p}_0} ^ *$,  $\lambda_i = \lambda_i ^ i,$ $\mu_i = \mu_i ^ i,$ $\nu = \textcolor{black}{\Tilde{p}_0} ^ *,$ $\omega = \omega_0 ^ 0,$
    $a_i ^ {*o} = \textcolor{black}{\alpha}_i (m ^ *)$, which implies that the efficient allocation of filters for all platforms and lower bound for the government is implemented by all GNE of the induced game.
\end{proof}

Next, we show that our mechanism guarantees the existence of at least one GNE \textcolor{black}{for the induced game. This ensures that the results of Lemmas \ref{lemma:truthful_prices} - \ref{lemma:feasibility} and Theorems \ref{budget_balance} - \ref{thm:implementation} are always valid for the induced game.}

\begin{theorem}[\textbf{GNE existence}] \label{thm:gne_existence}
    Let $a ^ {*o} = \big(a ^ {*o}_0, a ^ {*o}_1, \dots, a ^ {*o}_{|\mathcal{I}|} \big)$ be the unique optimal solution of Problem \ref{problem1}. Then, there is a GNE message profile $m^* \in \mathcal{M}$ of the induced game that guarantees that the filter profile $\big( \textcolor{black}{\alpha}_1(m ^ *), \dots, \textcolor{black}{\alpha}_{|\mathcal{I}|}(m ^ *)\big)$ and lower bound $\textcolor{black}{\alpha}_0(m ^ *)$ at GNE satisfy $\textcolor{black}{\alpha}_i(m ^ *) = a_i ^ {*o}$, for all $i \in \mathcal{J}$.
\end{theorem}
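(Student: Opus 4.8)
The plan is to give a \emph{constructive} existence proof: starting from the unique optimal solution $a^{*o}$ of Problem \ref{problem1} together with the Lagrange multipliers $(\lambda_i,\mu_i,\nu,\omega : i \in \mathcal{J})$ that certify its optimality through the KKT conditions \eqref{imp_central_1}--\eqref{imp_central_7}, I would exhibit a single message profile $m^*$ whose outcome reproduces $a^{*o}$, and then verify that $m^*$ satisfies \eqref{eqn:GNE_defn}. Conceptually this is the converse of Theorem \ref{thm:implementation}: there one starts from an arbitrary GNE and derives the optimum, whereas here one starts from the optimum and builds a GNE.

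First I would fix the \emph{filter and lower-bound proposals} so that every player proposes the optimal value, i.e.\ $\tilde{a}_k^{i*} = a_k^{*o}$ for every $i$ and every $k \in \mathcal{C}_i$, and $\tilde{a}_0^{i*} = a_0^{*o}$ for every player. By the averaging rules this forces $\alpha_k(m^*) = a_k^{*o}$ and $\alpha_0(m^*) = a_0^{*o}$, and, crucially, makes all proposals consistent, $\tilde{a}_l^{i*} = \tilde{a}_l^{-i*}$, so the quadratic penalty (the fourth term of \eqref{eqn:payment_function}) vanishes. Next I would set the proposed minimum trust $\tilde{h}_i^* = h_i(a_i^{*o})$, which yields the allocated $\eta_i(m^*) = n_i\, h_i(a_i^{*o})$ when the trust constraint is active and preserves $\sum_i \eta_i(m^*) = \alpha_0(m^*)$ (the identity used in the proofs of Theorem \ref{budget_balance} and Lemma \ref{lemma:feasibility}), so that \eqref{eqn:constraint_nash_2nd} is tight and feasible. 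Lemma \ref{lemma:achievability} already guarantees that each player can realize any desired allocation, so feasibility of this construction is immediate.

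The substantive step is the choice of \emph{prices}. For the trust price I would set every platform's proposed $\tilde{p}_0^{i*}$ and the government's proposed $\tilde{p}_0^*$ equal to the trust multiplier $\nu$; this makes the allocated $\pi_0^* = \nu = \tilde{p}_0^*$, so the government penalty in \eqref{eqn:payment_function_gov} vanishes (consistent with Lemma \ref{lemma:truthful_prices}) and the trust subsidies balance the government's investment. For the competition prices I would choose the proposed $\tilde{p}_l^{k*}$ so that each \emph{allocated} price equals the corresponding marginal valuation at the optimum, $\pi_l^{i*} = \partial v_i/\partial a_l|_{a^{*o}}$ for every $i$ and every $l \in \mathcal{C}_{-i}$, exactly as demanded by \eqref{eqn:implementation_1_2}. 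Since $\pi_l^{i*}$ is by definition the average of the proposals of the platforms in $\mathcal{C}_{-l}\setminus\{i\}$, matching all these targets at once reduces, for each $l$, to solving a linear system in the proposals $(\tilde{p}_l^{k*})_k$; Assumption \ref{assumption:cardinality} ($|\mathcal{C}_l|\geq 3$) keeps this system well posed, and the monotonicity of $v_i$ in the competitors' filters ensures the targets $\partial v_i/\partial a_l|_{a^{*o}}$ are nonnegative.

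Finally, I would verify optimality for each player by checking that $m^*$ satisfies the KKT conditions of every agent's subproblem, namely \eqref{eqn:implementation_1}--\eqref{eqn:implementation_6} for each platform's Problem \ref{problem2} and \eqref{imp_gov_1}--\eqref{imp_gov_5} for the government's Problem \ref{problem3}, by taking the agent-level multipliers equal to the centralized ones ($\lambda_i^i=\lambda_i$, $\mu_i^i=\mu_i$, $\nu_i^i = \nu = \tilde{p}_0^*$, $\omega_0^0 = \omega$). Under Assumptions \ref{assumption:filter_compatibility}--\ref{assumption:government_valuation} each subproblem has a concave objective over a convex feasible set, so these KKT conditions are \emph{sufficient} for a global maximum; hence $m_i^*$ best-responds to $m_{-i}^*$ for every $i$, and $m^*$ is a GNE with $\alpha_i(m^*)=a_i^{*o}$ for all $i\in\mathcal{J}$. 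I expect the price construction of the third paragraph to be the main obstacle: one must produce a \emph{single}, consistent, nonnegative proposal profile that simultaneously realizes the distinct allocated prices $\pi_l^{i*}$ required by each competitor's first-order condition, and confirming that \emph{nonnegative} (not merely real-valued) proposals exist is the delicate part of the argument.
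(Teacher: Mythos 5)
Your proposal is correct and follows essentially the same route as the paper's own proof: build the candidate equilibrium from the centralized KKT point of Problem \ref{problem1}, transfer the multipliers ($\lambda_i^i=\lambda_i$, $\mu_i^i=\mu_i$, $\nu_i^i=\nu=\Tilde{p}_0^{*}=\pi_0$, $\omega_0^0=\omega$), set the allocated prices to the marginal valuations $\pi_l^{i}=\frac{\partial v_i}{\partial \alpha_l}\big|_{a^{*o}}$, and conclude by sufficiency of the KKT conditions for the concave Problems \ref{problem2} and \ref{problem3}. If anything, you are more explicit than the paper, which argues only at the level of allocated quantities and silently assumes the target prices $\pi_l^{i}$ can be realized by nonnegative proposals $\Tilde{p}_l^{k}$ --- precisely the step you correctly flag as the delicate part (for $|\mathcal{C}_l|=3$ the unique solution is automatically nonnegative, but for $|\mathcal{C}_l|\geq 4$ it need not be when the marginals are heterogeneous).
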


\begin{proof}
    Consider that the optimal solution $a ^ {*o}$ which satisfies the KKT conditions for Problem \ref{problem1} with the corresponding Lagrange multipliers $(\lambda_i, \mu_i, \nu, \omega : i \in \mathcal{J}).$ Taking similar steps to the proof of Theorem \ref{thm:implementation}, we can show that for $\textcolor{black}{\Tilde{p}_0} = \pi_0 = \nu$, the Lagrange multipliers of Problems \ref{problem2} and \ref{problem3} are  $\lambda_i ^ i = \lambda_i, \; \mu_i ^ i = \mu_i, \; \nu_i ^ i = \nu, \; \omega_0^0 = \omega, \; i \in \mathcal{J},$  and the allocated prices  are $\textcolor{black}{\pi_l}^{i} = \frac{\partial v_i}{\partial \textcolor{black}{\alpha}_l} \big|_{a^{*o}},$ for all $l \in \mathcal{C}_{- i}.$ This implies that the allocated filters at GNE are $\textcolor{black}{\alpha}_i(m ^ *) = a_i ^ {*o}$ for all platforms $i \in \mathcal{I}$, and \textcolor{black}{the allocated lower bound of the government is $\textcolor{black}{\alpha}_0(m ^ *) = a_0 ^ {*o}$.}
\end{proof}
\color{black}
Next, we consider the step one (the participation step) of our mechanism from Section III-A. We first note that the government always participates in the mechanism for the opportunity to incentivize misinformation filtering among the platforms. In the following result (Theorem \ref{thm:ir}), we invoke Assumption \ref{assumption:excludability} and the properties of our mechanism, to show that in step one, every social media platform voluntarily decides to participate in the mechanism. This property is also called individual rationality of the mechanism as it ensures voluntary participation of rational players without dictatorship. 
\color{black}
\begin{theorem}[\textbf{Individually Rational}] \label{thm:ir}
    The proposed mechanism is individually rational, i.e., each platform $i \in \mathcal{I}$ prefers the outcome of every GNE of the induced game to the outcome of not participating.
\end{theorem}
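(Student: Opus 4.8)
The plan is to combine the achievability property of Lemma \ref{lemma:achievability} with the equilibrium optimality of each platform's message. First I would fix the benchmark: as established in Section III-A under the partial excludability of Assumption \ref{assumption:excludability}, a platform $i \in \mathcal{I}$ that declines to participate pays no tax and receives no benefit from the filters of the participating platforms, so its reservation utility is exactly $v_i(a_k = 0 : k \in \mathcal{C}_i)$. The goal is then to show that at an arbitrary GNE $m^*$ the equilibrium utility $u_i(m_i^*, m_{-i}^*)$ is at least this quantity, which is precisely the individual-rationality claim.

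The central step is to construct, for each $i$, a feasible unilateral deviation $m_i$ against the fixed equilibrium profile $m_{-i}^*$ that reproduces the non-participation outcome. Invoking Lemma \ref{lemma:achievability}, platform $i$ can choose its filter proposals $\Tilde{a}_k^i$ so that $\alpha_k(m) = 0$ for every $k \in \mathcal{C}_i$ (itself included), and choose its lower-bound proposal $\Tilde{a}_0^i$ so that $\alpha_0(m) = 0$. By the definition \eqref{eqn:defn_of_eta}, setting $\alpha_0(m) = 0$ forces $\eta_i(m) = 0$, whence the trust constraint \eqref{eqn:constraint_nash_2nd} holds because $n_i h_i(0) \geq 0 = \eta_i(m)$; together with $\alpha_i(m) = 0 \in [0,1]$ this makes the deviation admissible for Problem \ref{problem2}. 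In the same message I would also set every proposed price to zero, i.e. $\Tilde{p}_l^i = 0$ for all $l \in \mathcal{C}_{-i} \cup \{0\}$.

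Next I would evaluate the tax \eqref{eqn:payment_function} at this deviation and check that all four terms collapse: the first vanishes because $\eta_i(m) = 0$; the second because $\alpha_i(m) = 0$; the third because $\alpha_l(m) = 0$ for every $l \in \mathcal{C}_{-i}$; and the fourth because $\Tilde{p}_l^i = 0$. The allocated prices $\pi_i^l$ and $\pi_l^i$ multiplying the second and third terms lie outside platform $i$'s control, but this is immaterial since each is multiplied by a filter that the deviation has driven to zero. Hence $\tau_i(m) = 0$, and the valuation equals $v_i(\alpha_k(m) = 0 : k \in \mathcal{C}_i) = v_i(a_k = 0 : k \in \mathcal{C}_i)$, so this deviation delivers utility exactly equal to the reservation benchmark.

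Finally I would apply the GNE condition \eqref{eqn:GNE_defn}: since $m_i^*$ is a best response, $u_i(m_i^*, m_{-i}^*) \geq u_i(m_i, m_{-i}^*) = v_i(a_k = 0 : k \in \mathcal{C}_i)$, which says each platform weakly prefers the equilibrium outcome to abstaining, completing the proof. The main obstacle is the bookkeeping of the third step, namely confirming that a single admissible message simultaneously neutralizes all four tax terms together with the constraint \eqref{eqn:constraint_nash_2nd}; in particular, verifying that driving $\alpha_0(m)$ (and hence $\eta_i(m)$) to zero is compatible with simultaneously driving every $\alpha_k(m)$ to zero. Lemma \ref{lemma:achievability} is exactly what guarantees that these allocations can all be arranged by one unilateral deviation.
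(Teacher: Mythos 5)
Your proof is correct and follows essentially the same route as the paper's: fix the non-participation benchmark $v_i(a_k = 0 : k \in \mathcal{C}_i)$ via Assumption \ref{assumption:excludability}, use Lemma \ref{lemma:achievability} to construct a unilateral deviation with zero price proposals driving $\alpha_0(m)$, $\eta_i(m)$, and every $\alpha_k(m)$ to zero so that $\tau_i(m) = 0$, and conclude with the GNE inequality \eqref{eqn:GNE_defn}. If anything, your term-by-term verification that the tax \eqref{eqn:payment_function} vanishes and the explicit feasibility check $n_i h_i(0) \geq 0 = \eta_i(m)$ make the argument slightly more complete than the paper's own write-up.
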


\begin{proof}
    Consider any GNE message profile $m ^ *$. By Lemma \ref{lemma:achievability}, given profile $m_{- i} ^ *$, there exists a message $m_i \in \mathcal{M}_i$ for platform $i$ such that $\textcolor{black}{\alpha}_0(m_i, m_{- i} ^ *) = 0$. Furthermore, platform $i$ can unilaterally deviate in their message $m_i$ to ensure that for every platform $k \in \mathcal{C}_{i}$, the allocated filter is given by $\textcolor{black}{\alpha}_k(m_i, m_{- i} ^ *) = 0$. \textcolor{black}{Assumption \ref{assumption:excludability} implies that the utility of a non-participating platform $i \in \mathcal{I}$ is given by $v_i(a_k = 0: k \in \mathcal{C}_i)$.}
    Consider the message $m_i = (\textcolor{black}{\Tilde{h}_i}, \textcolor{black}{\Tilde{p}_i}, \Tilde{a}_i)$ defined in \eqref{eqn:defn_message} with $\textcolor{black}{\Tilde{p}_l} ^ i = 0,$ for all $l \in \mathcal{C}_{- i} \cup \{0\}, $
        $\Tilde{a}_k ^ i = - \sum_{l \in \mathcal{C}_{- i}} \Tilde{a}_k ^ l,$ for all $k \in \mathcal{C}_{- i},$ and $\Tilde{a}_0 ^ i = - \sum_{l \in \mathcal{J}_{- i}} \Tilde{a}_0 ^ i.$
    Then, the allocation $\textcolor{black}{\alpha}_k(m_i, m_{- i} ^ *) = 0$ is feasible for every platform $k \in \mathcal{C}_{i}$ and the corresponding tax for social media platform $i$ is given by $\textcolor{black}{\tau}_i = 0$. The utility $u_i(m_i, m_{- i} ^ *)$ of social media platform $i$ is given by $u_i(m_i, m_{- i} ^ *) = v_i(0, \dots, 0) - 0.$ From the definition of the GNE in $\eqref{eqn:GNE_defn}$, we have $u_i(m ^ *) \geq u_i(m_i, m_{- i} ^ *)$. Hence, $u_i(m ^ *) \geq v_i(0, \dots, 0).$
    We observe that the utility $u_i(m ^ *)$ at any GNE $m^* \in \mathcal{M}$ of a platform $i \in \mathcal{I}$, that decides to participate in the mechanism, is equal to or greater than their utility when not participating in the mechanism. Thus, in step one of the mechanism, the weakly dominant action of every social media platform $i \in \mathcal{I}$ is to participate in the mechanism.
\end{proof}

\subsection{Extension to Quasi-Concave Valuations}

In this subsection, we relax Assumptions \ref{assumption:filter_compatibility} - \ref{assumption:government_valuation}, and replace them with the following more general assumptions: (i) The valuation function \textcolor{black}{$v_i\big(a_k:k\in\mathcal{C}_i\big): \mathcal{A}^{|\mathcal{C}_i|} \to \mathbb{R}_{\geq0}$} of every platform $i \in \mathcal{I}$ is quasi-concave, differentiable, and have the same monotonic properties as before. (ii) The valuation function \textcolor{black}{$v_0(a_0): \mathcal{A} \to \mathbb{R}_{\geq0}$} of the government is quasi-concave, differentiable and increasing with respect to $a_0$. (iii) The average trust function \textcolor{black}{$h_i(a_i): \mathcal{A} \to [0,1]$} of any social media platform $i \in \mathcal{I}$ is a differentiable and increasing with respect to $a_i$. We cannot use the KKT conditions to prove the existence of a GNE and strong implementation under these relaxed assumptions. However, note that at any GNE, if one exists, the proposed mechanism is still budget balanced, feasible and individually rational. In addition, Lemmas \ref{lemma:truthful_prices}, \ref{lemma:truthful_action_proposals}, and \ref{lemma:achievability} also hold as they do not depend on the concavity of the valuation.

Next, we prove that for the relaxed assumptions, there exists a GNE and that it induces a Pareto efficient equilibrium in the game. Pareto efficiency refers to the condition where we cannot improve the utility of any player without decreasing the utility of another player in the induced game \cite{mas_colell1995}. Pareto efficiency is a weaker property in comparison to the strong implementation achieved by our mechanism for concave valuation functions.

\begin{theorem} \label{thm:quasi}
    Let the valuation function $v_i(a_k:k\in\mathcal{C}_i)$ be quasi-concave and differentiable for all players $i \in \mathcal{J}$ and consider the game $\langle \mathcal{M}, g(\cdot), (u_i)_{i \in \mathcal{I}} \rangle$. Then, (i) there exists a GNE for the induced game, and (ii) every GNE of the induced game is Pareto efficient.
\end{theorem}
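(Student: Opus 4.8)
The plan is to prove the two claims by replacing every appeal to concavity in Theorems \ref{thm:implementation} and \ref{thm:gne_existence} with an appeal to \emph{pseudo-concavity}. The enabling observation is that a differentiable quasi-concave function whose gradient does not vanish in its increasing directions is pseudo-concave; the stated strict monotonicity of $v_i$ and $v_0$ (decreasing in $a_i$, strictly increasing in each $a_l$, $l\in\mathcal{C}_{-i}$; increasing in $a_0$) rules out spurious interior stationary points and should deliver exactly this. For pseudo-concave objectives the first-order (KKT) conditions, which remain only \emph{necessary} for a general quasi-concave program, become \emph{sufficient} for a global optimum, and a feasible point admitting KKT multipliers with strictly positive objective weights is Pareto efficient. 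Both parts of the theorem rest on this single upgrade.

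For existence (part (i)) I would follow the constructive route of Theorem \ref{thm:gne_existence} rather than invoke concavity. Problem \ref{problem1} still possesses a global maximizer $a^{*o}$ by Weierstrass: its feasible set is nonempty and compact and its objective (the sum of valuations, the taxes cancelling by budget balance) is continuous, even though now $a^{*o}$ need not be unique. At $a^{*o}$ the necessary KKT conditions \eqref{imp_central_1}--\eqref{imp_central_7} hold under a standard constraint qualification, yielding multipliers $(\lambda_i,\mu_i,\nu,\omega)$. I then build the message profile exactly as before: each player proposes the components of $a^{*o}$ as filters, chooses $\tilde a_l^i=\tilde a_l^{-i}$ so the quadratic penalties in \eqref{eqn:payment_function}--\eqref{eqn:payment_function_gov} vanish, and sets prices so that $\pi_l^i=\partial v_i/\partial\alpha_l\big|_{a^{*o}}$ with $\tilde p_0=\pi_0=\nu$. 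It remains to certify that each message is a \emph{global} best response, i.e.\ solves Problem \ref{problem2} (resp.\ Problem \ref{problem3}); here I use pseudo-concavity of the player's objective in $m_i$ to promote the satisfied stationarity conditions into global optimality. Invoking Lemma \ref{lemma:achievability} to set $\alpha_0(m)=0$ makes the trust constraint \eqref{eqn:constraint_nash_2nd} slack and the $\min$ in \eqref{eqn:defn_of_eta} inactive, reducing the verification to a smooth regime in which the objective is a pseudo-concave valuation composed with the affine allocation map minus a locally minimized penalty. A purely topological alternative is to verify the hypotheses of a generalized-game (Debreu/Arrow--Debreu) existence theorem as catalogued in \cite{facchinei2010generalized}: nonempty compact convex strategy sets obtained by truncating the unbounded price and filter proposals to a large box (justified because the penalties and the feasibility box render extreme proposals suboptimal), payoffs continuous and quasi-concave in the own message, and the constraint correspondence $m\mapsto\mathcal{S}_i(m)$ with closed graph and nonempty convex values (immediate since $h_i$ is continuous and increasing).

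For Pareto efficiency (part (ii)) I would take an arbitrary GNE $m^*$ and show that its allocation $\big(\alpha_0(m^*),\dots,\alpha_{|\mathcal{I}|}(m^*)\big)$ is a Pareto-efficient point of the valuations over the feasible set of Problem \ref{problem1}. Since each message is a best response and the utilities are differentiable, the first-order necessary conditions \eqref{eqn:implementation_1}--\eqref{eqn:implementation_6} and their government analogues hold at $m^*$; their derivation used only differentiability. Combining them across players precisely as in Theorem \ref{thm:implementation}, and using Lemmas \ref{lemma:truthful_prices} and \ref{lemma:truthful_action_proposals} to identify prices with marginal valuations, recovers the KKT system \eqref{imp_central_1}--\eqref{imp_central_7} at $\alpha(m^*)$ with strictly positive objective weights. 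I then apply the pseudo-concave Pareto criterion: if some feasible $\hat a$ satisfied $v_i(\hat a)\ge v_i(\alpha(m^*))$ for all $i\in\mathcal{J}$ with one strict inequality, pseudo-concavity would give $\nabla v_i^{\top}(\hat a-\alpha(m^*))\ge 0$ for all $i$ and strictly for the improved player, so the stationarity identity would force a strictly positive inner product against the constraint gradients, contradicting the sign those gradients must have along feasible directions. Because budget balance (Theorem \ref{budget_balance}) makes the taxes pure transfers at $m^*$, this efficiency of the allocation is the asserted Pareto efficiency of the equilibrium, and it is weaker than the strong implementation of Theorem \ref{thm:implementation} exactly because $\alpha(m^*)$ need only be \emph{a} Pareto point rather than the unique welfare maximizer.

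The step I expect to be the main obstacle is the constraint handling once the concavity of $h_i$ is dropped. The trust constraint \eqref{eqn:constraint_2nd} is separable and, viewed jointly, its superlevel set $\{a:\sum_i n_i h_i(a_i)\ge a_0\}$ need no longer be convex, so the clean ``$\nabla g^{\top}(\hat a-\alpha(m^*))\le 0$ along feasible directions'' step used above is no longer automatic. I would circumvent this by keeping the constraint in its disaggregated per-platform form $h_i(\alpha_i)\ge\eta_i$, which is a single convex lower bound on platform $i$'s own filter (so the individual Problems \ref{problem2}--\ref{problem3} remain convex-constrained and the existence construction is unaffected), and by exploiting the monotonicity of $h_i$ together with the monotonicity of the valuations to sign the relevant directional derivatives directly; verifying that differentiable quasi-concave valuations are genuinely pseudo-concave, and that the composite player objectives inherit this property despite the kink in \eqref{eqn:defn_of_eta} and the non-jointly-convex products $\tilde p_l^i(\tilde a_l^i-\tilde a_l^{-i})^2$, is the remaining technical core.
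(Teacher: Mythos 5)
Your proposal diverges from the paper's proof and, as written, the pseudo-concavity route has gaps at exactly the points you flag as ``remaining technical core.'' First, the enabling claim fails under the paper's relaxed assumptions: a differentiable quasi-concave function is pseudo-concave only under a nonvanishing-gradient condition (Crouzeix--Ferland/Mangasarian type results), but in Section IV-A the government's $v_0$ is merely \emph{increasing} (not strictly) in $a_0$, and even strict monotonicity of $v_i$ in $a_l$ permits the partial derivative to vanish at interior points (as with $t \mapsto t^3$), so spurious stationary points are not ruled out and KKT sufficiency is not secured. Second, even granting pseudo-concavity of each $v_i$, pseudo-concavity is not preserved under sums: the actual objective of Problem \ref{problem2} is the valuation composed with the allocation map \emph{minus} the tax \eqref{eqn:payment_function}, which contains the terms $\tilde{p}_l^i \cdot (\tilde{a}_l^i - \tilde{a}_l^{-i})^2$ (not even quasi-concave jointly in the platform's own $(\tilde{p}_l^i, \tilde{a}_l^i)$ --- its upper level sets are nonconvex) and the kinked $\eta_i(m)$ of \eqref{eqn:defn_of_eta}. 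This also undermines your topological fallback, since ``payoffs quasi-concave in the own message'' is false on the raw message space. Third, your Pareto argument needs the linearized inequality $\nabla g(\alpha(m^*))^{\top}(\hat{a} - \alpha(m^*)) \le 0$ for feasible dominating $\hat{a}$, but once $h_i$ is not concave the constraint function $a_0 - \sum_i n_i \cdot h_i(a_i)$ of \eqref{eqn:constraint_2nd} is not convex and feasibility does not sign that inner product; the disaggregated constraints $h_i(\alpha_i) \ge \eta_i$ do not rescue this, because the $\eta_i$ are mechanism allocations rather than data of Problem \ref{problem1}, and a Pareto-dominating $\hat{a}$ need not respect any particular split of the aggregate bound.

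The paper avoids first-order conditions entirely, which is why these obstacles never arise. For existence, it uses Lemma \ref{lemma:truthful_action_proposals} to restrict to messages where all penalty terms vanish, writes the utility in the quasi-linear form \eqref{thm_5_1}, and uses Lemma \ref{lemma:achievability} to reparametrize platform $i$'s strategy as $\beta_i = (\eta_i, \alpha_k : k \in \mathcal{D}_i)$ ranging over a fixed convex compact set $\mathcal{B}_i$ independent of $m_{-i}$; existence then follows from Glicksberg's fixed-point theorem for quasi-concave payoffs (and any NE is a GNE). For part (ii), it observes that at equilibrium every player is a price taker, so the equilibrium allocation--price pair constitutes a Walrasian equilibrium, and Pareto efficiency follows from the first welfare theorem --- which requires only local nonsatiation, no concavity of $h_i$ or the valuations, and no constraint qualification. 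To salvage your route you would have to add a nonvanishing-gradient hypothesis, prove KKT sufficiency for the composite penalty-bearing objectives, and handle the nonconvex aggregate trust set; each is a substantial obstruction that the fixed-point/Walrasian argument simply bypasses.
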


\begin{proof}
    \textit{1) Existence:} Consider the social media platform $i \in \mathcal{I}$. Lemma 2 implies that at GNE, the message $m_i$ must lie in the set $\mathcal{M}_i' := \big\{m_i \in \mathcal{M}_i: \textcolor{black}{\Tilde{p}^i_l} \cdot (\Tilde{a}_l^i - a^{-i}_l) = 0, \; \forall l \in \mathcal{D}_{-i} \big\}$. For all $m_i \in \mathcal{M}_i'$, we can write the utility $u_i(m)$ as
    \begin{align} \label{thm_5_1}
    u_i(m) = \; &v_i(\textcolor{black}{\alpha}_k(m): k \in \mathcal{C}_i) + \textcolor{black}{\Tilde{p}_0} \cdot \textcolor{black}{\eta_i}(m) \nonumber \\
    &+ \sum_{l \in \mathcal{C}_{- i}} \textcolor{black}{\pi_i} ^ l \cdot \textcolor{black}{\alpha}_{i}(m) 
    - \sum_{l \in \mathcal{C}_{- i}} \textcolor{black}{\pi_l} ^ i \cdot \textcolor{black}{\alpha}_{l}(m),
    \end{align}
    where the prices $\textcolor{black}{\Tilde{p}_0}$, $\textcolor{black}{\pi_i} ^ l$, and $\textcolor{black}{\pi_l} ^ i$ for any $l \in \mathcal{C}_{-i}$ are independent of message $m_i$. We observe that $u_i(m) = u_i(\textcolor{black}{\eta_i}, \textcolor{black}{\alpha}_k : \textcolor{black}{\alpha}_k \in \mathcal{D}_{i})$. Lemma 4 implies that given a message profile $m_{-i}$ of all platforms and the government in $\mathcal{J}_{-i}$, platform $i$ can unilaterally deviate in their message $m_i \in \mathcal{M}_i$ to receive any allocation $\textcolor{black}{\alpha}_k(m) \in \mathcal{A}$, for all $k \in \mathcal{D}_{i}$. Thus, instead of the message $m_i$, we equivalently consider that the action of platform $i$ is to select the tuple $\textcolor{black}{\beta}_i = \big(\textcolor{black}{\eta_i}, \textcolor{black}{\alpha}_k : k \in \mathcal{D}_{i} \big)$, that takes values in the set $\textcolor{black}{\mathcal{B}}_i = \big\{[0,1] \times \mathcal{A}^{|\mathcal{D}_i|} : n_i \cdot h_i(\textcolor{black}{\alpha}_i) - \textcolor{black}{\eta_i} \geq 0\big\}$. For the differentiable function $h_i(a_i)$, the set $\textcolor{black}{\mathcal{B}}_i$ is convex, compact, and independent of the message profile $m_{-i}$. Similarly, the action of the government $\textcolor{black}{\alpha}_0$ takes values in the set $\mathcal{A}$ that is compact, convex, and independent of the message profile $m_{-0}$.
    
    Let the valuation $v_i\big(a_k:k\in\mathcal{C}_i\big)$ for every platform $i \in \mathcal{I}$ be quasi-concave and differentiable, and let \textcolor{black}{$\beta = \big(\beta_0, \beta_1, \dots, \beta_{|\mathcal{I}|}\big)$}. Then, for every $i \in \mathcal{I}$, the utility $u_i(\textcolor{black}{\beta})$ in \eqref{thm_5_1} is also quasi-concave and differentiable with respect to the action $\textcolor{black}{\beta}_i \in \textcolor{black}{\mathcal{B}}_i$. A similar argument implies that the government's utility $u_0(\textcolor{black}{\alpha}_0)$ is quasi concave and differentiable with respect to their action $\textcolor{black}{\alpha}_0$. Hence, it follows from Glicksberg's theorem that there exists a Nash Equilibrium (NE) for the induced game \cite{fudenberg1991}; since, by definition, any NE is also a GNE, it follows that there exists a GNE for the induced game.
    
    \textit{2) Pareto efficiency:} It is sufficient in our case to show that the NE can be characterized by a Walrasian equilibrium as all Walrasian equilibria are Pareto efficient \cite{mas_colell1995}.
    So, as in part 1, consider an arbitrary NE action profile $\textcolor{black}{\beta}^* = \big(\textcolor{black}{\alpha}_0^*, \textcolor{black}{\beta}_1^*, \dots, \textcolor{black}{\beta}_{|\mathcal{I}|}^*\big)$ that takes values in the set $\mathcal{A} \times \textcolor{black}{\mathcal{B}}_1 \times \cdots \times \textcolor{black}{\mathcal{B}}_{|\mathcal{I}|}$.
    From the definition of the NE, for every platform $i \in \mathcal{I}$ it holds that
    \begin{equation} \label{eqn:thm_NE}
        u_i(\textcolor{black}{\beta}^*) \geq u_i(\textcolor{black}{\beta}_i, \textcolor{black}{\beta}_{-i}^*), \quad \forall \textcolor{black}{\beta}_i \in \textcolor{black}{\mathcal{B}}_i.
    \end{equation}
    {Note that the NE prices $\textcolor{black}{\Tilde{p}_0}^*$, $\textcolor{black}{\pi^{*i}_l}$, $\textcolor{black}{\pi_l}^{*i}$, for all $l \in \mathcal{I}_{-i}$ cannot be influenced by platform $i$, i.e., every social media platform is a price taker. Then, using the definition of the NE in \eqref{eqn:thm_NE} with the utility $u_i(m)$ in \eqref{thm_5_1}, we can write for platform $i$ that
    \begin{multline}
        \textcolor{black}{\beta}_i^* = \arg \max_{\textcolor{black}{\beta}_i \in \textcolor{black}{\mathcal{B}}_i} \Bigg\{ v_i(\textcolor{black}{\alpha}_k: k \in \mathcal{C}_i) + \textcolor{black}{\Tilde{p}_0}^* \cdot \textcolor{black}{\eta_i} \\
        + \sum_{l \in \mathcal{I}_{- i}} \textcolor{black}{\pi_i} ^ {*l} \cdot \textcolor{black}{\alpha}_{i} - \sum_{l \in \mathcal{I}_{- i}} \textcolor{black}{\pi_l} ^ {*i} \cdot \textcolor{black}{\alpha}_{l} \Bigg\}.
    \end{multline}
    Similarly, the government also behaves as a price taker because it cannot influence the NE price $\pi^*_{0}$. For the government at NE, we can write that
    \begin{align}
        \textcolor{black}{\alpha}_0^* = \arg \max_{\textcolor{black}{\alpha}_0 \in \mathcal{A}} \{v_0(\textcolor{black}{\alpha}_0) - \pi^*_{0} \cdot \textcolor{black}{\alpha}_0\}.
    \end{align}
    It follows immediately that the NE action profile $\textcolor{black}{\beta}^*$ constitutes a Walrasian equilibrium and thus, the NE for the induced game forms a Pareto efficient equilibrium \cite{mas_colell1995}. Since any NE is also a GNE by definition, it follows that every GNE of the induced game is Pareto efficient.}
\end{proof}

\color{black}
\begin{remark}
    The GNE induced by our mechanism may not lead to allocated filters for platforms and lower bound for the government that form an optimal solution of Problem 1 using quasi-concave valuations. However, Theorem \ref{thm:quasi} establishes that a GNE still exists for such a system, and that it leads to a Pareto efficient allocation, where no player's utility can be improved without decreasing the utility of another player. Thus, from Theorem \ref{thm:ir}, we can conclude that for quasi-concave valuation functions, our mechanism incentivizes some misinformation filtering but may lead to suboptimal social welfare.
\end{remark}

\section{Discussion}\label{section:discussion}

\subsection{Interpretation of the Results}

In this subsection, we present an explanation of the mechanism presented in Section III and the main results derived in Section IV. The social planner seeks to design an efficient mechanism with the following two properties: (i) it should induce voluntary participation among all social media platforms, and (ii) it should maximize the social welfare, i.e., maximize the sum of utilities of all players. 
Note that the social welfare increases as the valuation function $v_0(a_0)$ increases, which, in turn, increases with respect to the lower bound on aggregate average trust, $a_0$. A sufficiently high lower bound $a_0$ indirectly ensures that some platforms implement non-zero filters to raise the average trust of their users. Thus, a mechanism that satisfies properties (i) and (ii) also incentivizes platforms to implement filtering, conditional on the government's valuation $v_0(a_0)$ and budget $b_0$ being sufficiently large. The challenge faced by the social planner is to achieve these properties without knowledge of the valuation function $v_0(a_0)$ of the government, the valuation function $v_i\big(a_k:k\in\mathcal{C}_i\big)$ of any platform $i \in \mathcal{I}$, and the average trust function $h_i(a_i)$ of any social media platform $i \in \mathcal{I}$.

To meet this challenge, we present a two-step mechanism in Section III. In the step one (the participation step) of the mechanism, the social planner asks each social media platform to decide whether they wish to participate in the mechanism. This is an essential question because the government is not dictatorial, i.e., it cannot force platforms to participate in the mechanism. By refusing to participate in the mechanism, platform $i$ can select no filter and pay no tax. However, platform $i$ also receives no subsidy from the government, nor benefits from the filters of platforms that do participate. We prove in Theorem \ref{thm:ir} of Section IV that the utility of any platform $i \in \mathcal{I}$ after participating in the mechanism is greater than or equal to their utility when they do not participate. Thus, the weakly dominant action of every platform in step one is to participate in the mechanism, establishing property (i).

In the step two (the bargaining step) of the mechanism, the social planner asks each player $i \in \mathcal{J}$ to broadcast a message $m_i \in \mathcal{M}_i$. Based on the message profile $m = \big(m_0,m_1,\dots,m_{|\mathcal{I}|}\big)$, the social planner allocates a minimum average trust $\eta_i(m)$, a filter $\alpha_i(m)$, and a tax $\textcolor{black}{\tau}_i(m)$ to each platform $i \in \mathcal{I}$. Similarly, she allocates a lower bound $\alpha_0(m)$ and tax $\textcolor{black}{\tau}_0(m)$ to the government. By participating in the mechanism in step one, each player $i \in \mathcal{J}$ agrees to implement the allocated filters, and either pay or receive the allocated tax. The rules defined by the social planner induce a game among the players whose equilibrium is defined as a GNE. The structure of the messages, and various parameters allocated by the social planner lead to the properties of the mechanism in Section IV.

We derive most of the properties of the mechanism in Section IV for a state where the platforms and the government are at a GNE. Lemmas \ref{lemma:truthful_prices} and \ref{lemma:truthful_action_proposals} establish preliminary properties of the tax functions $\textcolor{black}{\tau}_i(m)$ of each player $i \in \mathcal{I}$. They show that at the GNE, each player $i$ has to be consistent in their message $m_i$ with respect to the messages of other players. This consistency check ensures that no player can benefit from a manipulation of the mechanism by proposing arbitrary prices, filters, or lower bounds.
Then, we use the results of Lemmas \ref{lemma:truthful_prices} and \ref{lemma:truthful_action_proposals} to derive Theorem \ref{budget_balance}, which proves that at any GNE the mechanism is budget balanced, i.e., the sum of all taxes is $0$. This is a desirable property for the mechanism because the social planner is now guaranteed to simply take the investment of the government $\textcolor{black}{\tau}_0(m)$ and redistribute it among the social media platforms, without worrying about leftover funds or insufficient funds.
%It establishes that the social planner neither draws money out of the system nor makes an investment into the system. Instead, she simply takes the investment of the government $\textcolor{black}{\tau}_0(m)$ and redistributes it among the social media platforms.
Next, we show in Lemma \ref{lemma:feasibility} that every GNE of the induced game is a feasible solution to the problem of maximizing social welfare. Lemma \ref{lemma:achievability} proves that any social media platform $i \in \mathcal{I}$ can always achieve any desired filter in $\mathcal{A}$, including $0$, by selecting an appropriate message $m_i \in \mathcal{M}_i$. This property holds irrespective of the messages selected by the other players in $\mathcal{J}_{-i}$, and establishes that a participating platform has a free choice to control their allocated filter.

All preceding results allow us to prove in Theorem \ref{thm:implementation} that every GNE of the induced game maximizes the social welfare of the system. In Theorem \ref{thm:gne_existence} we prove that the induced game is guaranteed to have at least one GNE. Theorem \ref{thm:implementation} and Theorem \ref{thm:gne_existence}, together, imply that the mechanism maximizes the social welfare of the system, establishing property (ii). Thus, we have shown that our mechanism does, indeed, incentivize platforms to filter misinformation. 

Finally, in Section IV-A we consider quasi-concave valuation functions for all players to relax some of our assumptions. In Theorem \ref{thm:quasi}, we establish that the induced game is still guaranteed to have a GNE, and that it is Pareto efficient. Thus, we observe that our mechanism still incentivizes some amount of filtering, but may lead to suboptimal social welfare.

\subsection{An Example}

In this subsection, we present a descriptive example of how our proposed mechanism may play out in a realistic setting. Consider three major social media platforms: Facebook, Twitter, and Reddit. These platforms allow users from different socioeconomic and political backgrounds to obtain the latest news. Typically, users access either Facebook, Twitter, or Reddit via their smartphone app and engage with them by scrolling down, liking, or sharing posts that feature news and personal opinions. The amount of time spent by all users on the platform and the number of actions taken by them collectively define the engagement generated by the platform \cite{allcott2017social,jaakonmaki2017}.

As user engagement is a primary driver of advertisement revenue, Facebook, Twitter, and Reddit regularly optimize their post recommendation algorithms to maximizing user engagement. Over time, these algorithms have evolved to promote posts with a high chance of generating engagement among users, without accounting for their impact on the opinions of the users \cite{tufekci2018youtube}. This has led to the formation of echo chambers, or opinion bubbles among many users, where they repeatedly interact only with posts that align with their own biases on any topic. For many users, their prior biases lead to a repeated exposure to misinformation and conspiracy theories \cite{margetts2018}. This causes
%The opinions of users on Facebook, Twitter, and Reddit have been observed to be sensitive to the information they consume on these platforms. Thus, 
%repeated exposure to misinformation leads
uncertainty among them regarding the integrity democratic institutions \cite{bessi2015, brown2018,tucker2017, sternisko2020dark}. For example, misinformation during elections reduces people's faith in the fairness of the election results \cite{farrell2018common}, and misinformation about precautions during a pandemic reduces people's trust in public health experts \cite{motta2020right}. 

A democratic government can observe the trust of the country's citizens from the opinions expressed by them on various social media platforms. When the government realizes the impact of misinformation on the trust of the citizens, they seek to implement policies to minimize the spread of misinformation.
In practice, each social media platform can filter misinformation by either flagging posts with inaccurate information, following them up with truthful posts, or simply not recommending them to users.
%(see example U.S. Presidential Elections 2016).
%This is a consequence of the fact that users tend to like or share a post that aligns with their personal beliefs. In more detail, each platform can maximize its advertisement revenue in two ways: (i) by increasing the time spent by its users on the platform, and (ii) improving the performance of targeted advertisements through user generated data. Both of these approaches rely on increasing the engagement of users with the content on the platform.
%This model between users and social media platforms lead to a number of posts that are questionable on their factual validity become viral within a subset of users predisposed to accepting them due to inherent biases. 
However, filtering misinformation is an expensive undertaking for platforms because of (i) the high investment required to identify inaccurate information \cite{graves2018understanding}, and (ii) potential decrease in engagement of users who are censored \cite{candogan2020optimal}. Thus, the government decides to allocate a fixed budget for the problem, and appoints an independent agency to design appropriate incentives for Facebook, Twitter, and Reddit, while staying within the budget.

The agency presents the rules of our mechanism to the government, and confirms the government's participation. Then, the agency reveals the rules of the mechanism to the platforms, and announces that platforms who choose not to participate in this collaborative effort will be labelled as non-cooperative. Furthermore, the agency assures the three platforms that they need not reveal private information and that they can choose to avoid filtering misinformation even after participating in the mechanism (Lemma \ref{lemma:achievability}). These factors ensure that each platform participates voluntarily in the mechanism (Theorem \ref{thm:ir}).
%While the budget announced by the government functions as a proverbial \textit{carrot} for the platforms, the potential loss of credibility functions as a proverbial \textit{stick}, ensuring their participation. 
Then, the agency asks each of Facebook, Twitter, and Reddit to propose a minimum threshold to which they will raise the trust of their users in democratic institutions. The tax incentives given to each platform will be proportional to this threshold. Simultaneously, the agency asks the government to propose a minimum acceptable level for the average of all platforms' thresholds. The government's investment will be proportional to this minimum average.
The agency also asks each platform to propose various filters and prices they are willing to pay or receive for the proposed filters. Similarly, the government proposes a price for their proposed minimum average.

The agency then publicly reveals all proposals and transparently uses the rules of the mechanism to assign a potential subsidy/payment, and potential filter to each platform. Similarly, she assigns a potential amount of investment and minimum average to the government. These assignments become binding only if all stakeholders, Facebook, Twitter, Reddit, and the government, accept the assignments. If any stakeholder is dissatisfied, the agency asks all of them to change their proposals and resubmit. This process is repeated until all the stakeholders reach a consensus.
The mechanism ensures that such a consensus exists (Theorem \ref{thm:gne_existence}) and that it is the best possible result for all stakeholders (Theorem \ref{thm:implementation}). 
As long as the government is sufficiently committed to addressing the problem of misinformation, the mechanism ensures that at the consensus, the platforms will agree to implement misinformation filters. The allocations become binding on all stakeholders, and the independent agency collects the government's investment. This investment is paid out to each of Facebook, Twitter, and Reddit as a subsidy, only after they achieve the binding level of filtering.

\section{Conclusions and Future Work}\label{section:conclusion}

%\todo[inline]{Needs to be revised with a lot more Future Work.}

%In this paper, we studied the incentivization of social media platforms to filter users' misleading information on their platforms. We designed a mechanism that induces a game with a GNE with the following desirable properties: a GNE always exists and is strongly implementable, the optimal filter for all social media is feasible at or off the equilibrium, participation of the social media is voluntary, and the system has a balanced budget at any GNE. Finally, we relax the assumption of concavity of the valuation and average trust of all players, and show the our mechanism still induces a Pareto efficient GNE.

Our primary goal in this paper was to design a mechanism to induce a GNE solution in the misinformation filtering game, where (i) each platform agrees to participate voluntarily, and (ii) the collective utility of the government and the platforms is maximized. We designed a mechanism and proved that it satisfies these properties along with budget balance. We also presented an extension of the mechanism with weaker technical assumptions. 

Ongoing work focuses on improving the valuation and average trust functions of the social media platforms based on data. We also consider incorporating uncertainty in a platform's estimates of the impact of their filter. These refinements of the modeling framework will allow us to make our mechanism more practical for use in the real world.

Future research should include extending the results of this paper to a dynamic setting in which the social media platforms react in real-time to the proposed taxes/subsidies. In particular, someone could develop an algorithm that the players can use to iteratively arrive at the Nash equilibrium. In such an algorithm, the social planner can receive additional information from the players while they iteratively learn the GNE. Then, she can use this information to change her allocations dynamically, allowing us to relax either Assumption \ref{assumption:assumed_knowledge_b} on monitoring of average trust, or Assumption \ref{assumption:excludability} on the excludability of the platforms.

\color{black}

\bibliographystyle{IEEEtran}
\bibliography{references}

\section*{Appendix A}
In this appendix, we present an extension of by relaxing Assumption \ref{assumption:cardinality} to a more general assumption that no platform has a monopoly on its users. The mechanism presented in this assumes that for any platform $i \in \mathcal{I}$ with the set of competing platforms $\mathcal{C}_i$, it holds that $|\mathcal{C}_i| \geq 2$.

We consider the same step one (the participation step) for the mechanism as before. Then in step two (the bargaining step), the message of platform $i$ is defined as
\begin{equation}\label{eqn:defn_message_2}
    m_i := (\textcolor{black}{\Tilde{h}_i}, \textcolor{black}{\Tilde{p}_i}, \Tilde{a}_i),
\end{equation}
where
$\textcolor{black}{\Tilde{h}_i} \in \mathbb{R}_{\geq 0}$ is the minimum average trust that platform $i$ proposes to achieve through filtering; $\textcolor{black}{\Tilde{p}_i}$, is the collection of prices that platform $i$ is willing to pay or receive per unit changes in the filters of other competing platforms \textcolor{black}{(except $i$)} and \textcolor{black}{the government's lower bound}, given by
\begin{equation}\label{eqn:prices_2}
    \textcolor{black}{\Tilde{p}_i} := 
    \begin{aligned}
    \begin{cases}
    (\textcolor{black}{\Tilde{p}_l} ^ i : l \in \mathcal{D}_{i}), \quad &\text{if } |\mathcal{C}_{i}| = 2, \\
    (\textcolor{black}{\Tilde{p}_l} ^ i : l \in \mathcal{D}_{-i}), \quad &\text{if } |\mathcal{C}_{i}| \geq 3,
    \end{cases}
    \end{aligned}
\end{equation}
where $\textcolor{black}{\Tilde{p}_l} ^ i \in \mathbb{R}_{\geq0}$ for all $i, l \in \mathcal{J}$;
and $\Tilde{a}_i := (\Tilde{a}_k ^ i: k \in \mathcal{D}_i),$ with $\Tilde{a}_i \in \mathbb{R} ^ {|\mathcal{D}_{i}|}$ is the profile of filters for all competing platforms \textcolor{black}{(including $i$)} and \textcolor{black}{government's lower bound proposed by platform $i$.}

The message of the government is  $m_0 := (\textcolor{black}{\Tilde{p}_0}, \Tilde{a}_0 ^ 0)$, where $\textcolor{black}{\Tilde{p}_0} \in \mathbb{R}_{\geq 0}$ is the price that the government is willing to pay or receive per unit change of the average trust, and $\Tilde{a}_0^0 \in \mathbb{R}$ is the \textcolor{black}{lower bound} proposed by the government.

\textcolor{black}{Based on the message profile $m := (m_0, m_1,$ $\dots, m_{|\mathcal{I}|})$ that the social planner receives, she allocates the following parameters to the players:}

\textcolor{black}{\textit{1)} The social planner allocates a filter to each platform $i \in \mathcal{I}$ and a lower bound to the government such that the constraints of Problem $1$ are satisfied.}
The filter allocated by the social planner to platform $i$ is $\textcolor{black}{\alpha}_i(m) := \sum_{k \in \mathcal{C}_i} \frac{\Tilde{a}_i ^ k}{|\mathcal{C}_i|}$. The lower bound allocated by the social planner to the government is $\textcolor{black}{\alpha}_0(m) := \sum_{k \in \mathcal{J}} \frac{\Tilde{a}_0 ^ k}{|\mathcal{J}|}$. 

\textcolor{black}{\textit{2)}} The social planner allocates a minimum average trust {$\textcolor{black}{\eta_i}(m) \in [0,1]$} to each platform $i \in \mathcal{I}$, given by
\begin{equation}\label{eqn:defn_of_eta_2}
    \textcolor{black}{\eta_i}(m) := \min\left\{ \frac{n_i \cdot \textcolor{black}{\Tilde{h}_i}}{\sum_{k \in \mathcal{I}} n_k \cdot \textcolor{black}{\Tilde{h}_k}} \cdot \textcolor{black}{\alpha}_0(m), \; 1\right\},
\end{equation}
where the social planner will not accept a message $m_i$ from a platform $i$ that might lead to a situation where $\sum_{k \in \mathcal{I}} n_k \cdot \textcolor{black}{\Tilde{h}_k} = 0$.
\textcolor{black}{The allocated minimum average trust, $\eta_i(m)$, is a lower bound on average trust that must be achieved by platform $i$.}
Let the filter implemented by platform $i$ be $a_i$. Then, platform $i$ must ensure that $n_i \cdot h_i(a_i)\geq \textcolor{black}{\eta_i}(m)$.
%to ensure that constraint \eqref{eqn:constraint_2nd} of Problem \ref{problem1} is not violated.}
Recall from Section III-B that, as a result of Assumption \ref{assumption:assumed_knowledge_b}, the social planner can prevent the platforms from violating the constraint imposed by $\eta_i(m)$.

\textcolor{black}{\textit{3)}}
The social planner also allocates a payment price
\begin{align} \label{eqn:tau_def_2}
    \textcolor{black}{\pi_l}  ^ i := 
    \begin{cases}
        \Tilde{p}_l^l, \quad &\text{if } |\mathcal{C}_l| = 2, \\
        \sum_{k \in \mathcal{C}_{- l} : k \neq i} \dfrac{\textcolor{black}{\Tilde{p}_l} ^ k}{|\mathcal{C}_{l}| - 2}, \quad &\text{if } |\mathcal{C}_l| \geq 3, 
    \end{cases}
\end{align}
where $\textcolor{black}{\pi^i_l} \in \mathbb{R}_{\geq 0},$ to be paid by platform $i \in \mathcal{I}$ for a unit change in allocated filter $\textcolor{black}{\alpha}_l(m)$ of every other competing platform $l \in \mathcal{C}_{- i}$. Furthermore, the social planner allocates a subsidy price
\begin{align} \label{eqn:sigma_def_2}
    \sigma^i_l := 
    \begin{cases}
        \Tilde{p}^l_i, \quad &\text{if } |\mathcal{C}_l| = 2, \\
        \sum_{k \in \mathcal{C}_{- l} : k \neq i} \dfrac{\textcolor{black}{\Tilde{p}_l} ^ k}{|\mathcal{C}_{l}| - 2}, \quad &\text{if } |\mathcal{C}_l| \geq 3, 
    \end{cases}
\end{align}
where $\sigma^i_l \in \mathbb{R}_{\geq 0},$ to be received by platform $i \in \mathcal{I}$ from every other competing platform $l \in \mathcal{C}_{- i}$, for a unit change in allocated filter $\alpha_i(m)$. For the government, the social planner simply allocates a price $\pi_0 := \sum_{i \in \mathcal{I}} \frac{\textcolor{black}{\Tilde{p}^i_0}}{|\mathcal{I}|}$ to be paid for a unit change in lower bound $\alpha_0(m)$.

\begin{remark}
    Note that when $|\mathcal{C}|_i = 2$, platform $i \in \mathcal{I}$ proposes a price corresponding to their own proposed action $\Tilde{a}_i(m)$. In contrast, when $|\mathcal{C}_i| \geq 3$, platform $i$ does not proposes a price corresponding to their own filter. However, we have designed the payment price in \eqref{eqn:tau_def_2} and subsidy price \eqref{eqn:sigma_def_2} so that platform $i$ cannot affect either of these prices with their message $m_i$. Thus, each still platform behaves as a \textit{price taker} when $|\mathcal{C}_i| = 2$.
\end{remark}

\textcolor{black}{\textit{4)}} The social planner \textcolor{black}{allocates} the following tax to each social media platform $i \in \mathcal{I}$,
\begin{gather}
    \textcolor{black}{\tau}_i := - \textcolor{black}{\Tilde{p}_0} \cdot \textcolor{black}{\eta_i}(m) - \sum_{l \in \mathcal{C}_{- i}} \sigma^i _ l \cdot \textcolor{black}{\alpha}_{i}(m) 
    + \sum_{l \in \mathcal{C}_{- i}} \textcolor{black}{\pi^i_l} \cdot \textcolor{black}{\alpha}_{l}(m) \nonumber \\
    + \sum_{l \in \mathcal{C}_{- i} \cup \{0\}} \textcolor{black}{\Tilde{p}_l} ^ i \cdot (\Tilde{a}_l ^ i - \Tilde{a}_l ^ {- i}) ^ 2 \nonumber \\
    + \sum_{l \in \mathcal{C}_{-i}} \Big( \mathbb{I}(|\mathcal{C}_{i}| = 2)\cdot (\Tilde{p}_i^i - \Tilde{p}^l_i)^2 + \mathbb{I}(|\mathcal{C}_{l}| = 2) \cdot (\Tilde{p}_l^i - \Tilde{p}_l^l)^2 \Big), \label{eqn:payment_function_2}
\end{gather}
where $\mathbb{I}(\cdot)$ is the indicator function, $\Tilde{a}_l ^ {- i} = \sum_{k \in \mathcal{C}_{- l}} \frac{\Tilde{a}_l ^ k}{|\mathcal{C}_{l}| - 1}$, for each $l \in \mathcal{C}_{- i}$,
is the average of the proposed filters for $l$ by all competing platforms except $i\in\mathcal{I}$, and and $\Tilde{a}_0 ^ {- i} = \sum_{k \in \mathcal{J}_{-i}} \frac{\Tilde{a}_0 ^ k}{|\mathcal{J}| - 1}$ is the average of lower bounds proposed by all players except $i$. The tax $\textcolor{black}{\tau}_i(m)$ of platform $i$ in \eqref{eqn:payment_function_2} can be interpreted as follows: (i) the first term in \eqref{eqn:payment_function_2}  represents a subsidy given by the government to platform $i$ for the increase in average trust among the users of platform $i$; (ii) the second term in \eqref{eqn:payment_function_2} is a collection of subsidies given by each competing platform $l \in \mathcal{C}_{- i}$ to platform $i$ for the increase in valuation $v_l\big(\alpha_k(m): k \in \mathcal{C}_l\big)$ due to the allocated filter $\textcolor{black}{\alpha}_i(m)$; (iii) the third term in \eqref{eqn:payment_function_2} is a payment by platform $i$ for the increase in valuation $v_i\big(\alpha_k(m):k \in \mathcal{C}_i\big)$ due to the allocated filter $\alpha_l(m)$ of each competing platform $l \in \mathcal{C}_{- i}$; (iv) the fourth term in \eqref{eqn:payment_function_2} is a collections of penalties to platform $i$ if either the filter proposed in message $m_i$ for any competing platform $l \in \mathcal{C}_{- i}$ is inconsistent the filters proposed by other platforms, or if the lower bound proposed in $m_i$ is inconsistent with the lower bound proposed by other players; and (v) the fifth term is a collection of penalties to social media $i$ for inconsistency in the proposed price, only applicable if $|\mathcal{C}_{i}| = 2$, or if $|\mathcal{C}_{l}| = 2$ for some $l \in \mathcal{C}_{-i}$.

The social planner also proposes the following payment function to the government:
\begin{equation}\label{eqn:payment_function_gov_2}
    \textcolor{black}{\tau}_0 := \pi_0 \cdot \textcolor{black}{\alpha}_0(m) + (\textcolor{black}{\Tilde{p}_0} - \pi_0) ^ 2,
\end{equation}
where the first term is the total investment made by the government for the allocated lower bound $\textcolor{black}{\alpha}_0(m)$, and the second term is a penalty when the price proposed by the government deviates from the price allocated to the government.

\begin{remark}
    Note that the presence of the indicator function $\mathbb{I}(\cdot)$ in \eqref{eqn:payment_function_2} does not lead to discontinuities in the utility $u_i(m_i,m_{-i})$ with respect to the message profile $m$.
\end{remark}

\begin{remark}
    The extended mechanism induces a game where the strategy of platform $i \in \mathcal{I}$ is $m_i \in \mathcal{M}_i$, such that $\alpha_i(m) \in \mathcal{S}_i(m)$. The equilibrium for the induced game is given by the GNE, defined in \eqref{eqn:GNE_defn}.
\end{remark}

Then, we note that the results of Lemmas \ref{lemma:truthful_prices} - \ref{lemma:truthful_action_proposals} hold for the extended mechanism. In addition, we prove in Lemma \ref{lemma:extended} that the equilibrium price received by any platform $i \in \mathcal{I}$ for an allocated filter $\alpha_l(m)$, $l \in \mathcal{C}_{-i}$, is the same as the price paid by the competing platform $l$.

\begin{lemma} \label{lemma:extended}
    Let message profile $m^* \in \mathcal{M}$ be a GNE of the induced game. Then, for each platform $i \in \mathcal{I}$ and each competing platform $l \in \mathcal{C}_{-i}$, it holds that $\sigma_l^{*i} = \textcolor{black}{\pi_i}^{*l}$.
\end{lemma}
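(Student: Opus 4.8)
The plan is to establish the identity $\sigma_l^{*i}=\pi_i^{*l}$ directly from the definitions in \eqref{eqn:tau_def_2} and \eqref{eqn:sigma_def_2}, organising the argument by the cardinality $|\mathcal{C}_i|$ of the competing set whose filter $\alpha_i(m)$ both quantities remunerate. Recall that $\sigma_l^i$ is the per-unit subsidy platform $i$ collects from the competitor $l$ for its own filter $\alpha_i$ (the second term of the tax \eqref{eqn:payment_function_2}), whereas $\pi_i^l$ is the per-unit payment that the same competitor $l$ incurs for $\alpha_i$ (the third term of $l$'s tax). The lemma is therefore the extended-mechanism analogue of the automatic cancellation that let a single price $\pi_l^i$ play the dual payment/subsidy role in \eqref{eqn:payment_function}, and it is exactly the consistency needed to carry Theorem \ref{budget_balance} over to the relaxed regime $|\mathcal{C}_i|\ge 2$ of Assumption \ref{assumption:cardinality}.

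First I would dispose of the generic case $|\mathcal{C}_i|\ge 3$. Here platform $i$ does not price its own filter, so both $\sigma_l^i$ and $\pi_i^l$ reduce, by construction, to the same average $\tfrac{1}{|\mathcal{C}_i|-2}\sum_{k\in\mathcal{C}_{-i}:\,k\neq l}\Tilde{p}_i^k$ of the prices proposed for $\alpha_i$ by the remaining competitors, with both $i$ and $l$ excluded---precisely the common rate already used in the main mechanism \eqref{eqn:payment_function}. No equilibrium hypothesis enters: the equality holds identically in $m$.

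The hard part is the boundary case $|\mathcal{C}_i|=2$, where $\mathcal{C}_{-i}=\{l\}$ is a singleton and $i$ is permitted to quote a price $\Tilde{p}_i^i$ for its own filter. Now the two rates no longer agree as identities: $\pi_i^l$ collapses to the self-proposal $\Tilde{p}_i^i$, while $\sigma_l^i$ is the cross-proposal $\Tilde{p}_i^l$ that $l$ submits for $\alpha_i$. To bridge them I would invoke the GNE optimality of $m^*$ exactly as in the proof of Lemma \ref{lemma:truthful_prices}: the sole dependence of $u_i=v_i-\tau_i$ on $\Tilde{p}_i^i$ is through the squared penalty $\mathbb{I}(|\mathcal{C}_i|=2)\,(\Tilde{p}_i^i-\Tilde{p}_i^l)^2$ in the fifth term of \eqref{eqn:payment_function_2}, so $u_i$ is strictly concave in $\Tilde{p}_i^i$ and the first-order condition $\tfrac{\partial u_i}{\partial \Tilde{p}_i^i}\big|_{m^*}=-2\,(\Tilde{p}_i^{i*}-\Tilde{p}_i^{l*})=0$ forces $\Tilde{p}_i^{i*}=\Tilde{p}_i^{l*}$. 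Substituting this equilibrium consistency gives $\pi_i^{*l}=\Tilde{p}_i^{i*}=\Tilde{p}_i^{l*}=\sigma_l^{*i}$, which settles the case.

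The step I expect to demand the most care is the index bookkeeping of this boundary case. One must check that the pair penalised in the fifth term of \eqref{eqn:payment_function_2} is precisely the self-price/cross-price pair appearing in $\pi_i^l$ and $\sigma_l^i$, and that the companion penalty $\mathbb{I}(|\mathcal{C}_l|=2)\,(\Tilde{p}_l^i-\Tilde{p}_l^l)^2$---active when the competitor also satisfies $|\mathcal{C}_l|=2$---reproduces, through $l$'s own first-order condition, the symmetric consistency that pins down every residual proposed price at $m^*$. With both first-order conditions in hand the cardinality sub-cases collapse to the single asserted identity, and the lemma follows for every $i\in\mathcal{I}$ and every $l\in\mathcal{C}_{-i}$.
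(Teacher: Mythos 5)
Your proof is correct and follows essentially the same route as the paper's: the case $|\mathcal{C}_i|\geq 3$ holds identically from the definitions in \eqref{eqn:tau_def_2}--\eqref{eqn:sigma_def_2}, and the boundary case is settled by the first-order condition on the quadratic penalty in the fifth term of \eqref{eqn:payment_function_2}, which forces $\Tilde{p}_i^{*i}=\Tilde{p}_i^{*l}$ at any GNE (the paper likewise dispatches $|\mathcal{C}_l|=2$ by the symmetric first-order condition on $\Tilde{p}_l^l$). Your only deviation is expository: you key the case split on the cardinality $|\mathcal{C}_i|$ of the platform whose filter is being priced---which quietly fixes an index inconsistency in the displayed formula for $\sigma_l^i$---whereas the paper splits on both $|\mathcal{C}_i|$ and $|\mathcal{C}_l|$; the underlying argument is the same.
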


\begin{proof}
    Consider two social media platforms $i \in \mathcal{I}$ and $l \in \mathcal{C}_{-i}$. The result holds from the definition of $\sigma_l^i$ and $\textcolor{black}{\pi_i}^l$ when both $|\mathcal{C}_{i}| \geq 3$ and $|\mathcal{C}_{l}| \geq 3$.
    
    Let $|\mathcal{C}_{i}| = 2$, with $\mathcal{C}_i = \{i,l\}$, and let $m_{-i}^*$ be the message profile at GNE of all players except $i$. In order to maximize their utility $u_i(m_i,m_{-i}^*)$, platform $i$ must select a price $\Tilde{p}_i^{*i}$ that minimizes the tax $\textcolor{black}{\tau}_i$ in \eqref{eqn:payment_function_2}. Thus, $\frac{\partial u_i}{\partial \Tilde{p}^i_i} \big|_{\Tilde{p}^{*i}_i} = 2 \cdot (\Tilde{p}^{*i}_i - \Tilde{p}^{*l}_i) = 0,$ which yields $\Tilde{p}^{*i}_i = \Tilde{p}^{*l}_i$. Then, the result holds using the definitions of $\sigma_l^i$ and $\textcolor{black}{\pi_i}^l$ in \eqref{eqn:sigma_def_2} and \eqref{eqn:tau_def_2}, respectively. Through a similar analysis, we can prove the result when $|\mathcal{C}_{l}| = 2$.
\end{proof}

An additional implication of Lemma \ref{lemma:extended} is that the fifth term in \eqref{eqn:payment_function_2} is $0$ at any GNE. Thus, it can be verified that the results of Lemmas \ref{lemma:feasibility} - \ref{lemma:achievability} and Theorems \ref{budget_balance} - \ref{thm:quasi} hold for the extended mechanism.

\end{document}